\title{\LARGE \bf When Smoothness is Not Enough:\\\mbox{Toward Exact Quantification and Optimization of the Price-of-Anarchy} \\
\thanks{This work is supported by ONR grant \#N00014-17-1-2060, NSF grant \#ECCS-1638214, and SNSF grant \#P2EZP2\_181618. 
}}
\author{Rahul Chandan \and Dario Paccagnan \and Jason R. Marden
\thanks{R. Chandan and J.R. Marden are with the Department of Electrical and Computer Engineering and the Center of Control, Dynamical Systems and Computation, UC Santa Barbara, USA. Email: \href{mail_to:rchandan@ucsb.edu}{rchandan@ucsb.edu}, \href{mail_to:jrmarden@ece.ucsb.edu}{jrmarden@ece.ucsb.edu}.}
\thanks{D. Paccagnan is with the Department of Mechanical Engineering and the Center of Control, Dynamical Systems and Computation, UC Santa Barbara, USA. Email: \href{mail_to:dariop@ucsb.edu}{dariop@ucsb.edu}.}}
\begin{document}
\renewcommand{\arraystretch}{1.2}
\renewcommand{\thetable}{\arabic{table}}

\maketitle 

\begin{abstract}
    Today's multiagent systems have grown too complex to rely on centralized controllers, prompting increasing interest in the design of distributed algorithms. 
    In this respect, game theory has emerged as a valuable tool to complement more traditional techniques.
    The fundamental idea behind this approach is the assignment of agents' local cost functions, such that their selfish minimization attains, \hbox{or is provably close to,} the global objective. 
    \hbox{Any algorithm capable of computing an} equilibrium of the corresponding \hbox{game inherits an approximation} ratio that is, in the worst case, equal to its price-of-anarchy.
    Therefore, a successful application of the game design approach hinges on the possibility to quantify and optimize the equilibrium performance.

    Toward this end, we introduce the notion of generalized smoothness, and show that the resulting efficiency bounds are significantly tighter compared to those obtained using the traditional smoothness approach. 
    Leveraging this newly-introduced notion, we quantify the equilibrium performance for the class of local resource allocation games.
    Finally, we show how the agents' local decision rules can be designed in order to optimize the efficiency of the corresponding equilibria, by means of a tractable linear program.
\end{abstract}

\section{Introduction}
\label{sec:intro}
Interest in the field of multiagent systems' control has experienced rapid growth in recent years, as a variety of application domains have emerged \cite{bussmann2013multiagent,wu2017distributed}. The impact of recent advancements in multiagent control has been far-reaching, revolutionizing traditional industries such as transportation and power networks \cite{yu2016smart,spieser2014toward,le2015decentralized}, while also driving the development of novel technologies including robotic swarms and self-driving cars \cite{el2013distributed,zhang2016collaborative}. 

Modern multiagent systems must adhere to imposing constraints with regards to their spatial distribution, overall scale, privacy requirements and communication bandwidth.
As a consequence, the coordination of such systems does not allow for centralized decision making,
but instead requires the use of distributed protocols.
Ideally, a distributed algorithm will meet the system's requirements for scalability, communication bandwidth, and security, while achieving the desired global objective.

A well-established and fruitful approach to tackle this class of problems consists in the design of a centralized maximization algorithm, that is later distributed by leveraging the structure of the problem considered, e.g., \cite{nedic2009distributed,wei2013distributed}.
An alternative approach, termed \textit{game design}, has emerged in parallel as a valuable tool to complement the aforementioned design philosophy \cite{shamma2007cooperative}.
Instead of directly specifying the decision-making process, local cost functions are assigned to the system's agents such that their selfish minimization results in the achievement of the system-level objective.


The advantages of using this approach are two-fold: i) we inherit a pool of algorithms that are distributed by nature, asynchronous, and resilient to external disturbances \cite{arslan2007autonomous}; and, ii) we obtain access to readily-available performance certificates in the form of efficiency bounds.
In fact, any (distributed) algorithm capable of driving the system to an equilibrium configuration (e.g. pure Nash equilibrium, mixed Nash equilibrium, correlated equilibrium, etc.) will inherit an approximation ratio matching the corresponding worst-case equilibrium efficiency, called the \textit{price-of-anarchy}.
Motivated by the game-theoretic approach, we aim to develop novel techniques to quantify and minimize the price-of-anarchy in distributed systems.

\subsection{Related Works}

The characterization of the price-of-anarchy has received significant research interest, particularly in the context of atomic congestion games~\cite{christodoulou2005price,awerbuch2005price,aland2006exact,gairing2009covering}. Although the smoothness framework provides exact price-of-anarchy bounds for atomic congestion games~\cite{roughgarden2009intrinsic}, the derivation of these bounds still requires a considerable amount of analysis.
In stark contrast, we construct tractable linear programs for computing and optimizing the price-of-anarchy relative to the class of local resource allocation games, without any analysis. 
These linear programs extend the approach put-forward in~\cite{paccagnan2018distributed,chandan2019optimal}.

As smoothness arguments have proven useful when characterizing the performance of broad classes of equilibria \cite{caragiannis2015bounding}, they have also been applied to a variety of other problems, including learning \cite{foster2016learning}, and mechanism design \cite{syrgkanis2013composable}.
Unfortunately, as observed in \cite{paccagnan2018distributed} and proven later in this manuscript, traditional smoothness arguments find limited applicability in connection to \emph{design problems}. Generalized smoothness is tailored to resolve this weakness, while retaining all the strengths of the traditional smoothness approach.
This novel notion of smoothness is most similar to the style of argument used in \cite{gairing2009covering,ramaswamy2017impact} to quantify the price-of-anarchy for covering problems.

\subsection{Our Contributions} 
In this work, we introduce a broader notion of smoothness, referred to as generalized smoothness, which allows us to provide tighter bounds on the performance of coarse-correlated equilibria. To demonstrate the strength of this novel approach, we apply our result to local resource allocation problems, and show that the bounds are tight. 
In more detail, our contributions are as follows:
\begin{enumerate}
    \item We demonstrate that price-of-anarchy bounds obtained via smoothness arguments are not tight if the sum of players' local cost functions is not equal to the system cost (\cref{prop:strictlooseness}).
    \item We introduce the notion of generalized smoothness, and show that, in general, it provides tighter bounds on the price-of-anarchy compared to current smoothness approaches (\cref{thm:gsmooth,lem:gsmoothbetter}).
    \item For the class of local resource allocation problems, we show that generalized smoothness provides tight bounds on the price-of-anarchy (\cref{thm:worstcasegame}). As a consequence, we demonstrate how the price-of-anarchy can be characterized (\cref{thm:equivalencetolp}) and optimized (\cref{thm:optimaldistributionrules}) using tractable linear programs. 
    Finally, we show that many existing price-of-anarchy results, e.g., \cite{christodoulou2005price,awerbuch2005price,aland2006exact}, can be reproduced by simply solving a corresponding linear program.
\end{enumerate}

For ease of presentation, many of the proofs are reported in the appendix.

\section{Problem Statement} \label{sec:problemstatement}

Consider a class of resource allocation problems where $N = \{1, \dots, n\}$ denotes a set of agents, and each agent $i$ must select an action $a_i$ from a given action set $\mathcal{A}_i$. The system cost induced by allocation $a = (a_1, \dots, a_n) \in \mathcal{A} = \mathcal{A}_1 \times \dots \times \mathcal{A}_n$ is $C(a)$, where $C:\mathcal{A} \to \mathbb{R}$. Our objective is to find an optimal allocation, i.e. an allocation $a^\mathrm{opt}$,
\begin{equation} \label{eq:optimal_allocation}
    a^{\mathrm{opt}} \in \argmin_{a \in \mathcal{A}} C(a)\text{.}
\end{equation}
Since this class of combinatorial problems is inherently intractable,
in the remainder of the paper, we aim to obtain an approximate solution to \eqref{eq:optimal_allocation} through a distributed and tractable algorithm, ideally with the best possible approximation ratio\footnote{%
For ease of presentation, most of our analysis will focus on Nash equilibria, which are intractable to find, or even nonexistent, in general. Nevertheless, we will show in \cref{thm:gsmoothcce} that our results generalize to the much broader set of coarse-correlated equilibria, which can be found in polynomial time for a broad class of games \cite{papadimitriou2008computing}, and are guaranteed to exist in a broader class of games than pure Nash equilibria \cite{roughgarden2015intrinsic}.}. 
We tackle the problem using the game-theoric approach discussed in the introduction. 
Towards this goal, for every instance of problem in \eqref{eq:optimal_allocation}, we introduce a corresponding game where the agent set is $N$, each agent's action set is $\mathcal{A}_i$, and in which each agent $i$ evaluates its actions using a local cost function $J_i : \mathcal{A} \to \mathbb{R}$. In the forthcoming analysis, we will focus on the solution concept of Nash equilibrium, defined as any allocation $a^\textrm{ne} \in \mathcal{A}$ such that, 
\begin{equation} \label{eq:equilibriumconditions}
    J_i(a^\textrm{ne}) \leq J_i(a_i, a_{-i}^\textrm{ne}) \quad \forall a_i \in \mathcal{A}_i, \forall i \in N,
\end{equation}
where $a_{-i} = (a_1, \dots, a_{i-1}, a_{i+1}, \dots, a_n)$.
We represent the game as defined above with the tuple $G = (N,\mathcal{A}, \{J_i\}, C)$, where $\{J_i\} = \{J_1, \dots, J_n\}$. We measure the equilibrium performance in a given game using the notion of price-of-anarchy,
\begin{equation}
    \textrm{PoA}(G) := \frac{\max_{a\in\textrm{NE}(G)} C(a)}{\min_{a \in \mathcal{A}} C(a)}\text{,}
\end{equation}
where $\textrm{NE}(G)$ is the set of all pure Nash equilibria of the game $G$. Informally, the price-of-anarchy describes the ratio between the worst performing equilibrium and the optimal allocation. 
A lower price-of-anarchy is indicative of higher overall equilibrium performance. As such, the price-of-anarchy is an upper-bound on the efficiency of \emph{any} equilibrium in the game. In cases where we have a family of games $\mathcal{G}$, the price-of-anarchy is further defined as,
\begin{equation}
    \textrm{PoA}(\mathcal{G}) := \sup_{G \in \mathcal{G}} \textrm{PoA}(G).
\end{equation}
Our work centers around the following two questions:
\begin{enumerate}
    \item Given a class of cost-minimization games, how do we quantify the price-of-anarchy?
    \item How can agents' local cost functions be designed in order to minimize the price-of-anarchy?
\end{enumerate}

\section{The smoothness framework}
The smoothness framework developed in \cite{roughgarden2015intrinsic} has proven to be versatile, bringing a number of different price-of-anarchy results under a common analytical language, and producing tight bounds on the price-of-anarchy for different classes of problems \cite{roughgarden2009intrinsic,roughgarden2017price}.
In this section, we revisit the notion of smooth games, and recall how smoothness arguments are employed to bound the corresponding price-of-anarchy. 
The cost-minimization game $G$ is ($\lambda,\mu$)-smooth if $\sum_{i=1}^n J_i(a) \geq C(a)$ for all $a \in \mathcal{A}$, and if, for any two allocations $a, a' \in \mathcal{A}$, there exist $\lambda > 0$ and $\mu < 1$ such that
\begin{equation} \label{eq:smoothnessconditions}
    \sum_{i=1}^n J_i(a_i', a_{-i}) \leq \lambda\,C(a') + \mu\,C(a).
\end{equation}
The price-of-anarchy of a ($\lambda,\mu$)-smooth game $G$ is upper-bounded as $\rm{PoA}(G) \leq \lambda/(1-\mu)$.

Observe that if all games in a class $\mathcal{G}$ can be shown to be ($\lambda,\mu$)-smooth, then $\rm{PoA}(\mathcal{G})$ is also upper-bounded by $\lambda/(1-\mu)$.
Accordingly, the best price-of-anarchy bound that can be derived using the smoothness framework, termed the \textit{robust price-of-anarchy} \cite{roughgarden2015intrinsic}, is given by,
\begin{equation} \label{eq:rpoa}
\rm{RPoA}(\mathcal{G}) := \inf_{\lambda > 0, \mu < 1} \left\{ \frac{\lambda}{1-\mu} \text{ s.t. \eqref{eq:smoothnessconditions} holds } \forall G \in \mathcal{G} \right\}.
\end{equation}
Note that, in general, $\textrm{PoA}(\mathcal{G}) \leq \textrm{RPoA}(\mathcal{G})$. Fortunately, $\textrm{PoA}(\mathcal{G}) = \textrm{RPoA}(\mathcal{G})$ for the well-studied class of congestion games, in which $\sum_{i=1}^n J_i(a) = C(a)$ for all $a \in \mathcal{A}$, see \cite{roughgarden2015intrinsic}.

However, smoothness arguments are not applicable to games where $\sum_{i=1}^n J_i(a) < C(a)$ even for only one $a \in \mathcal{A}$. 
Additionally, the robust price-of-anarchy does not provide a tight bound when $\sum_{i=1}^n J_i(a) > C(a)$ for all $a \in \mathcal{A}$, as we demonstrate in the following theorem.
\begin{theorem} \label{prop:strictlooseness}
    For a given game $G$, assume $\sum_{i=1}^n J_i(a) > C(a)$ holds for all $a \in \mathcal{A}$. Then,
    \begin{equation}
        \rm{RPoA}(G) > \rm{PoA}(G).
    \end{equation}
\end{theorem}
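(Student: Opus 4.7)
The plan is to sharpen the textbook smoothness derivation by inserting one strict inequality, then argue that the infimum defining $\mathrm{RPoA}(G)$ is attained at some $(\lambda^{\star},\mu^{\star})$ with $\mu^{\star}<1$, so that the strengthened bound gives a strictly tighter estimate than $\lambda^{\star}/(1-\mu^{\star})$. Let $a^{\mathrm{ne}}\in\mathrm{NE}(G)$ be a worst-case pure Nash equilibrium (the claim is vacuous otherwise) and $a^{\mathrm{opt}}$ an optimal allocation, and set $\delta := \sum_i J_i(a^{\mathrm{ne}}) - C(a^{\mathrm{ne}})$, strictly positive by hypothesis. Costs will be assumed positive throughout, as is customary for the ratio defining $\mathrm{PoA}(G)$ to be meaningful.

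First, for any feasible pair $(\lambda,\mu)$ in \eqref{eq:rpoa}, summing the equilibrium conditions \eqref{eq:equilibriumconditions} over the deviations $a^{\mathrm{opt}}_i$ and invoking \eqref{eq:smoothnessconditions} at $a=a^{\mathrm{ne}}$, $a'=a^{\mathrm{opt}}$ gives
\[
C(a^{\mathrm{ne}}) + \delta \;=\; \sum_i J_i(a^{\mathrm{ne}}) \;\leq\; \sum_i J_i(a^{\mathrm{opt}}_i, a^{\mathrm{ne}}_{-i}) \;\leq\; \lambda\, C(a^{\mathrm{opt}}) + \mu\, C(a^{\mathrm{ne}}).
\]
Dividing by $(1-\mu)\,C(a^{\mathrm{opt}})>0$ rearranges to
\[
\mathrm{PoA}(G) \;+\; \frac{\delta}{(1-\mu)\,C(a^{\mathrm{opt}})} \;\leq\; \frac{\lambda}{1-\mu}.
\]

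Second, I will pass to the infimum without losing the additive slack; the only risk is sequences along which $\mu\to 1$. To exclude them, I use \eqref{eq:smoothnessconditions} at the diagonal $a=a'=a^{\mathrm{ne}}$, which forces $(\lambda+\mu)\,C(a^{\mathrm{ne}})\geq C(a^{\mathrm{ne}})+\delta$, i.e.\
\[
\frac{\lambda}{1-\mu} \;\geq\; 1 + \frac{\delta}{(1-\mu)\,C(a^{\mathrm{ne}})}.
\]
Fix $M$ strictly greater than $\mathrm{RPoA}(G)$, which is finite because taking $\mu=0$ with $\lambda$ large is always feasible. The displayed inequality forces $\mu \leq \bar\mu := 1 - \delta/[(M-1)\,C(a^{\mathrm{ne}})]<1$ on the sublevel set $\{\lambda/(1-\mu)\leq M\}$; combined with the remaining linear smoothness constraints this sublevel set is closed and bounded in $\mathbb{R}^2$, hence compact. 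The continuous objective $\lambda/(1-\mu)$ therefore attains its minimum at some $(\lambda^{\star},\mu^{\star})$ with $\mu^{\star}\leq \bar\mu<1$, and substituting into the previous display yields
\[
\mathrm{PoA}(G) \;+\; \frac{\delta}{(1-\mu^{\star})\,C(a^{\mathrm{opt}})} \;\leq\; \mathrm{RPoA}(G),
\]
in which the additive gap is strictly positive.

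The main obstacle is the boundary behaviour $\mu\to 1$ of the feasible polyhedron: once the diagonal case of \eqref{eq:smoothnessconditions} is invoked to certify a compact sublevel set on which the infimum is attained, the rest is a routine sharpening of the standard smoothness proof by the single strict step $C(a^{\mathrm{ne}})<\sum_i J_i(a^{\mathrm{ne}})$.
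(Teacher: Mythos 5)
Your opening computation is correct and is essentially the paper's own argument in additive rather than multiplicative form: the paper extracts a uniform $\gamma>1$ with $\sum_i J_i(a)\geq\gamma C(a)$ and concludes $\mathrm{PoA}(G)\leq\lambda^*/(\gamma-\mu^*)$, while you carry the slack $\delta=\sum_i J_i(a^{\mathrm{ne}})-C(a^{\mathrm{ne}})$ through the same chain of inequalities. Either way one obtains, for \emph{every} feasible $(\lambda,\mu)$, a bound on $\mathrm{PoA}(G)$ strictly below $\lambda/(1-\mu)$, and you correctly identify that the whole proof then hinges on the infimum in \eqref{eq:rpoa} being attained, since a pointwise strict inequality does not survive an infimum. (The paper's proof silently assumes attainment by writing ``$\lambda^*,\mu^*$ optimize \eqref{eq:rpoa}''; you at least try to prove it.)

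That attempt is where the gap lies: you guard against the wrong boundary. The diagonal constraint $a=a'=a^{\mathrm{ne}}$ does cap $\mu$ away from $1$, but the dangerous escape is $\mu\to-\infty$ with $\lambda\to+\infty$, along which $\lambda/(1-\mu)$ can \emph{decrease} to its infimum while your additive slack $\delta/[(1-\mu)C(a^{\mathrm{opt}})]$ vanishes. Concretely, take one player with $\mathcal{A}=\{a,a'\}$, $C(a)=2$, $C(a')=1$, $J_1(a)=2.1$, $J_1(a')=3$. Then $\sum_i J_i>C$ everywhere, the unique Nash equilibrium is $a$, so $\mathrm{PoA}(G)=2$; for $\mu\leq 0$ the binding constraint in \eqref{eq:smoothnessconditions} is $\lambda+2\mu\geq 3$, giving $\lambda/(1-\mu)\geq 2+1/(1-\mu)$, which decreases to $2$ as $\mu\to-\infty$ without ever reaching it. Hence $\mathrm{RPoA}(G)=2=\mathrm{PoA}(G)$: the sublevel set $\{\lambda/(1-\mu)\leq M\}$ contains the unbounded feasible ray $(3-2\mu,\mu)$ for any $M>2$, your compactness claim fails, and in fact the theorem's strict inequality itself fails. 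So this is not a repairable technicality in your write-up alone: both your proof and the paper's need the additional hypothesis that the infimum in \eqref{eq:rpoa} is attained (or some condition excluding the $\mu\to-\infty$ recession direction, such as $\mathrm{RPoA}(G)<\max_{a}C(a)/\min_{a'}C(a')$); the paper's own \cref{lem:inf_not_attained} shows that non-attainment of the analogous infimum is a real phenomenon.
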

\begin{proof}
    By assumption, there must exist $\gamma > 1$ such that $\sum_{i=1}^n J_i(a) \geq \gamma C(a)$ for all $a \in \mathcal{A}$. Observe that, for $\lambda > 0$ and $\mu < 1$ as in \eqref{eq:smoothnessconditions},
    \begin{align*}
        \gamma C(a^\textrm{ne}) &\leq \sum_{i=1}^n J_i(a^\textrm{ne}) \leq \sum_{i=1}^n J_i(a^\textrm{opt}, a_{-i}^\textrm{ne}) \\
        &\leq \lambda\,C(a^\textrm{opt}) + \mu\,C(a^\textrm{ne}), 
    \end{align*}
    where the above inequalities hold by assumption, by \eqref{eq:equilibriumconditions}, and by \eqref{eq:smoothnessconditions}, respectively. As the equilibrium conditions in \eqref{eq:equilibriumconditions} are scale-invariant, it must be that
    \[ \rm{PoA}(G) \leq \frac{\lambda^*}{\gamma - \mu^*} < \frac{\lambda^*}{1 - \mu^*} = \textrm{RPoA}(G), \]
    where $\lambda^* > 0$, $\mu^* < 1$ optimize \eqref{eq:rpoa}.
\end{proof} 

\section{Generalized smoothness}
In the previous section, we showed that traditional smoothness arguments are unsuitable for bounding equilibrium performance when the sum of agents' local costs is not equal to the system cost. In the following, we introduce a new notion of smoothness that provides tight bounds for a broader class of games.

\begin{theorem}[Generalized Smoothness] \label{thm:gsmooth}
    Suppose there exist ${\lambda} > 0$ and ${\mu} < 1$ such that for every game $G \in \mathcal{G}$, and any two action profiles $a, a' \in \mathcal{A}$,
    \begin{equation} \label{eq:gsmooth}
        \sum_{i=1}^n J_i(a_i', a_{-i}) - \sum_{i=1}^n J_i(a) + C(a) \leq {\lambda}\,C(a') + {\mu}\,C(a).
    \end{equation}
    Then, the price-of-anarchy satisfies,
    \[ \rm{PoA}(\mathcal{G}) \leq \frac{{\lambda}}{1-{\mu}}. \]
\end{theorem}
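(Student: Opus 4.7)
The plan is to mirror the classical smoothness argument, but to exploit the extra $-\sum_i J_i(a) + C(a)$ term in \eqref{eq:gsmooth} to bypass the requirement that $\sum_i J_i(a) \geq C(a)$. Fix an arbitrary game $G \in \mathcal{G}$, let $a^{\textrm{ne}}$ be any pure Nash equilibrium of $G$, and let $a^{\textrm{opt}}$ be an optimal allocation. The end goal is to show $C(a^{\textrm{ne}}) \leq \frac{\lambda}{1-\mu} C(a^{\textrm{opt}})$, after which taking suprema over $G \in \mathcal{G}$ delivers the claim.

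First, I would invoke the Nash equilibrium condition \eqref{eq:equilibriumconditions} at the unilateral deviation $a_i \mapsto a_i^{\textrm{opt}}$ for each agent $i$, and sum across $i = 1, \dots, n$, obtaining
\[ \sum_{i=1}^n J_i(a^{\textrm{ne}}) \leq \sum_{i=1}^n J_i(a_i^{\textrm{opt}}, a_{-i}^{\textrm{ne}}). \]
Second, I would apply the generalized smoothness inequality \eqref{eq:gsmooth} with the specific choice $a = a^{\textrm{ne}}$ and $a' = a^{\textrm{opt}}$, which after rearrangement reads
\[ \sum_{i=1}^n J_i(a_i^{\textrm{opt}}, a_{-i}^{\textrm{ne}}) \leq \lambda\,C(a^{\textrm{opt}}) + \mu\,C(a^{\textrm{ne}}) + \sum_{i=1}^n J_i(a^{\textrm{ne}}) - C(a^{\textrm{ne}}). \]

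Chaining these two inequalities produces the common term $\sum_i J_i(a^{\textrm{ne}})$ on both sides, which cancels. What remains is simply $C(a^{\textrm{ne}}) \leq \lambda\,C(a^{\textrm{opt}}) + \mu\,C(a^{\textrm{ne}})$, and since $\mu < 1$ one rearranges to the desired ratio $C(a^{\textrm{ne}})/C(a^{\textrm{opt}}) \leq \lambda/(1-\mu)$. Taking the worst-case equilibrium and then the supremum over $G \in \mathcal{G}$ concludes.

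I do not expect any real obstacle here: the whole design of \eqref{eq:gsmooth} is to engineer the cancellation of $\sum_i J_i(a^{\textrm{ne}})$, thereby removing the need for the welfare-domination hypothesis that traditional smoothness demands. The only conceptual point worth articulating clearly in the writeup is why the added $C(a)$ term on the left-hand side of \eqref{eq:gsmooth} is what produces the factor $(1-\mu)$ rather than $(\gamma - \mu)$ that appeared in the proof of \cref{prop:strictlooseness}; this explains how generalized smoothness repairs the looseness exhibited there.
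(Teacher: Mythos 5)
Your proposal is correct and is essentially the paper's own argument: both apply the summed Nash equilibrium condition together with \eqref{eq:gsmooth} at $(a,a')=(a^{\textrm{ne}},a^{\textrm{opt}})$ to arrive at $C(a^{\textrm{ne}})\leq\lambda\,C(a^{\textrm{opt}})+\mu\,C(a^{\textrm{ne}})$, differing only in whether the cancellation of $\sum_i J_i(a^{\textrm{ne}})$ is performed before or after chaining the two inequalities.
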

\begin{proof}
    For all $G \in \mathcal{G}$, for all $a^\textrm{ne} \in \textrm{NE}(G)$ and $a^\textrm{opt} \in \mathcal{A}$,
    \begin{equation}\label{eq:gsmoothinequality}
        \begin{split}
            C(a) &\leq \sum_{i=1}^n J_i(a_i^\textrm{opt}, a_{-i}^\textrm{ne}) - \sum_{i=1}^n J_i(a^\textrm{ne}) + C(a) \\
            &\leq {\lambda}\,C(a^\textrm{opt}) + {\mu}\,C(a^\textrm{ne}),
        \end{split}
    \end{equation}
    where the first inequality holds by \eqref{eq:equilibriumconditions}, and the second, by \eqref{eq:gsmooth}. Rearranging \eqref{eq:gsmoothinequality}, one gets the desired result.
\end{proof} 
We use the name \textit{generalized smoothness} as this novel notion of smoothness reduces to traditional smoothness when $\sum_{i=1}^n J_i(a) = C(a)$.
Observe that generalized smoothness does not even require $\sum_{i=1}^n J_i(a) \geq C(a)$, and thus applies to a much broader class of games. In parallel to the previous section, we define the \textit{generalized price-of-anarchy} as the best price-of-anarchy bound that can be derived using the generalized smoothness framework, 
\begin{equation}
    \rm{GPoA}(\mathcal{G}) := \inf_{{\lambda} > 0,{\mu} < 1} \left\{\frac{\lambda}{1-\mu} \text{ s.t. \eqref{eq:gsmooth} holds } \forall G \in \mathcal{G}\right\}. 
\end{equation}

In the following theorem, we demonstrate that the bounds obtained using the generalized smoothness framework are always better than those provided by traditional smoothness.

\begin{theorem} \label{lem:gsmoothbetter}
    For all games $G \in \mathcal{G}$ s.t. $\sum_{i=1}^n J_i(a) \geq C(a)$,
    \[ \rm{PoA}(G) \leq \rm{GPoA}(G) \leq \rm{RPoA}(G). \]
    Additionally, if for all $a \in \mathcal{A}$, $\sum_{i=1}^n J_i(a) > C(a)$. Then,
    \[ \rm{GPoA}(G) < \rm{RPoA}(G). \]
\end{theorem}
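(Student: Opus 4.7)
The plan is to argue inclusion of feasible sets for the two linear programs defining $\rm{RPoA}$ and $\rm{GPoA}$, and then exploit the slack between $\sum_i J_i(a)$ and $C(a)$ to construct a strictly better generalized smoothness pair.

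First, the bound $\rm{PoA}(G) \leq \rm{GPoA}(G)$ is immediate from \cref{thm:gsmooth} applied to the singleton family $\{G\}$. For the second inequality, I would show that whenever $\sum_i J_i(a) \geq C(a)$ for every $a \in \mathcal{A}$, any pair $(\lambda,\mu)$ feasible for the traditional smoothness program \eqref{eq:rpoa} is also feasible for the generalized smoothness program. Indeed, the LHS of \eqref{eq:gsmooth} differs from the LHS of \eqref{eq:smoothnessconditions} by the nonpositive term $C(a)-\sum_i J_i(a)$; hence \eqref{eq:smoothnessconditions} implies \eqref{eq:gsmooth} with the same $\lambda$ and $\mu$. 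The feasible set of the RPoA program is therefore contained in that of the GPoA program, giving $\rm{GPoA}(G) \leq \rm{RPoA}(G)$.

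For the strict inequality under the hypothesis $\sum_i J_i(a) > C(a)$ for all $a \in \mathcal{A}$, I would mirror the mechanism used in the proof of \cref{prop:strictlooseness}. Concretely, there exists $\gamma > 1$ such that $\sum_i J_i(a) \geq \gamma\,C(a)$ for every $a \in \mathcal{A}$. Letting $(\lambda^*, \mu^*)$ attain the infimum in \eqref{eq:rpoa} (or, if the infimum is not attained, taking a sequence approaching it), one substitutes $\sum_i J_i(a) \geq \gamma C(a)$ into the LHS of \eqref{eq:gsmooth} and applies \eqref{eq:smoothnessconditions} to obtain
\[
\sum_{i=1}^n J_i(a_i', a_{-i}) - \sum_{i=1}^n J_i(a) + C(a) \leq \lambda^*\,C(a') + (\mu^* - \gamma + 1)\,C(a).
\]
Since $\mu^* < 1 < \gamma$ forces $\mu^* - \gamma + 1 < 1$, and $\lambda^* > 0$, the pair $(\lambda^*, \mu^* - \gamma + 1)$ is feasible for the generalized smoothness program, so
\[
\rm{GPoA}(G) \leq \frac{\lambda^*}{1-(\mu^*-\gamma+1)} = \frac{\lambda^*}{\gamma - \mu^*} < \frac{\lambda^*}{1-\mu^*} = \rm{RPoA}(G),
\]
as desired.

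The main obstacle is this last step: identifying a \emph{multiplicative} constant $\gamma > 1$, rather than a merely additive gap, and verifying that the shifted pair $(\lambda^*, \mu^* - \gamma + 1)$ remains in the feasible region (in particular that $\mu^* - \gamma + 1 < 1$ and $\lambda^* > 0$). This is precisely where the scale-invariance of the equilibrium conditions plays its role, exactly as in the proof of \cref{prop:strictlooseness}, and once the $\gamma$ is in hand the strict improvement over $\rm{RPoA}(G)$ follows by direct comparison of the two ratios.
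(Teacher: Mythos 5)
Your proof is correct, and for the non-strict inequality it is essentially the paper's argument made cleaner: you observe that under $\sum_i J_i(a)\geq C(a)$ the left-hand side of \eqref{eq:gsmooth} is dominated by that of \eqref{eq:smoothnessconditions}, so the feasible set of \eqref{eq:rpoa} is contained in that of the GPoA program. This actually handles the mixed case (equality at some profiles, strict inequality at others) more uniformly than the paper, which treats only the two extremes ``$=$ everywhere'' and ``$>$ everywhere.'' Where you genuinely diverge is the strict inequality. The paper argues existentially: the pointwise strict inequality leaves an unquantified $\epsilon>0$ of slack, permitting a perturbed pair $(\lambda^*-\epsilon,\mu^*)$ or $(\lambda^*,\mu^*-\epsilon)$ to remain feasible for \eqref{eq:gsmooth}, whence $\rm{GPoA}<\rm{RPoA}$ by monotonicity of $\lambda/(1-\mu)$. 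You instead import the multiplicative constant $\gamma>1$ from the proof of \cref{prop:strictlooseness} and exhibit the explicit feasible pair $(\lambda^*,\mu^*-\gamma+1)$, yielding the quantitative bound $\rm{GPoA}(G)\leq\lambda^*/(\gamma-\mu^*)$, which is strictly below $\rm{RPoA}(G)$ and, as a bonus, matches the improved PoA bound of \cref{prop:strictlooseness}. Both routes implicitly rely on the same two facts you should note: finiteness of $\mathcal{A}$ (so that a uniform $\gamma>1$, respectively a uniform $\epsilon>0$, exists) and attainment of the infimum in \eqref{eq:rpoa} (your sequence-based fallback does not obviously preserve strictness if $\mu_k\to-\infty$, but the paper's proof assumes attainment outright, so this is not a gap relative to the paper).
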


Since the result in \cref{lem:gsmoothbetter} holds for every game in the class $\mathcal{G}$, the inequalities hold with $\mathcal{G}$ in the place of $G$, i.e. for the whole class.

\section{Local resource allocation games}
In this section, we introduce the specialized class of \textit{local resource allocation games}.
We then show that the generalized smoothness framework provides concrete and tight bounds on the price-of-anarchy relative to this class.
This analysis will extend the applicability of the linear programming approach presented in \cite{chandan2019optimal} to all coarse-correlated equilibria and to multiple resource types.

Consider a game $G$ with agent set $N = \{1, \dots, n\}$, and a finite set of resources $\mathcal{R}$, where every resource $r \in \mathcal{R}$ has a cost function $c_r:N \to \mathbb{R}$, and a cost-generating function $f_r:N \to \mathbb{R}$. Each agent $i \in N$ is associated with an action set $\mathcal{A}_i \subseteq 2^\mathcal{R}$. 
For a given allocation $a \in \mathcal{A}$, we define the system cost and local cost functions as,
\begin{align*}
    C(a) &= \sum_{r \in \mathcal{R}} c_r(|a|_r), \\
    J_i(a_i, a_{-i}) &= \sum_{r \in a_i} f_r(|a|_r), 
\end{align*}
where $|a|_r$ is the number of agents selecting resource $r$ in allocation $a$. We adopt the convention that $c_r(0) = 0$ for all $r \in \mathcal{R}$, without loss of generality.
We identify the aforementioned game with the tuple $G = (n, \mathcal{R}, \mathcal{A}, \{(c_r, f_r)\}_{r \in \mathcal{R}})$. 

We define a \textit{scalable} class of local resource allocation games $\mathcal{G}_T^n$ as the set of all $n$-player local resource allocation games in which, for every resource $r \in \mathcal{R}$, there exists $v_r \geq 0$ such that $c_r(\cdot) = v_r \cdot c(\cdot)$ and $f_r(\cdot) = v_r f(\cdot)$. The pair of functions $(c,f)$ is drawn from a finite set of resource types $T = \{(c_1,f_1), \dots, (c_m, f_m)\}$.
We refer to the functions $\{f_t\}_{t = 1}^{m}$ as \textit{distribution rules}, since each function $f_t$ describes how the value $v_r$ of its corresponding resource is split among the agents.

We observe that many classes of problems studied in the literature can be analyzed using this model. Important examples include vehicle-target assignment problems \cite{arslan2007autonomous}, set covering problems \cite{gairing2009covering,ramaswamy2017impact}, and atomic congestion games \cite{christodoulou2005price,awerbuch2005price,aland2006exact}. 
Before presenting our results, we demonstrate the generality of the local resource allocation problem formulation in the next subsection, using the example of atomic congestion games. 

\subsection{An Illustrative Example: Atomic Congestion Games} \label{ss:example}
To demonstrate the generality of the local resource allocation problem presented above, we analyze congestion games, a classical cost-minimization problem \cite{rosenthal1973class}. 
A congestion game consists of a player set $N = \{1, \dots, n\}$, and a finite set of edges $E$, where every player $i \in N$ selects a path $a_i$ from its corresponding set of paths $\mathcal{A}_i \subseteq 2^E$. Each edge $e \in E$ is associated with a latency function $\ell_e : N \to \mathbb{R}$. For a given allocation $a = (a_1, \dots, a_n)$, the system cost and local cost functions are defined as 
\begin{align*}
    C(a) &= \sum_{e \in E} \ell_e(|a|_e)\,|a|_e, \\
    J_i(a_i, a_{-i}) &= \sum_{e \in a_i} \ell_e(|a|_e).
\end{align*}
Observe that, in this context, a congestion game can be modelled as a local resource allocation game, where the set of edges corresponds to the set of resources, i.e. $\mathcal{R} = E$, the latency functions $\ell_r$ play the role of distribution rules $f_r$, and $c_r$ is substituted with $\ell_r(x) \cdot x$ for every $r \in \mathcal{R}$.

\vspace*{2mm}\noindent\textit{Congestion games with affine latencies \cite{awerbuch2005price}.} 
A special class of atomic congestion games is that of affine congestion games, in which the edge latency functions are restricted to the form $\ell_e(x) = a_e x + b_e$, where $a_e, b_e \geq 0$ for all $e \in E$. The class of affine congestion games is equivalent to the class of local resource allocation problems with two resource types; $T = \{(x^2, x),(x, 1)\}$.\footnote{Informally, this means that there are two edge types in the congestion game, those that impose a latency proportional to the number of agents selecting them, and those that have constant latency.} 

As an elementary example, consider an $n$-player game with resources $\mathcal{R} = \{r_j\}_{j=1}^4$, each with value $v_j \geq 0$. The resources $r_1$ and $r_3$ are associated with type $(x^2, x)$, whereas $r_2$ and $r_4$ have type $(x, 1)$.
For every agent $i \in N$, we define the action set $\mathcal{A}_i = \{a_1, a_2\}$ with $a_1 = (r_1,r_2)$ and $a_2 = (r_3,r_4)$. Observe that this game can be represented by the two-link network shown in \cref{fig:congestiongame}, where $J_i(a) = v_1|a|_{r_1}+v_2$ for an agent $i$ that selects action $(r_1,r_2)$, and $J_i(a) = v_3|a|_{r_3}+v_4$ for agents selecting action $(r_3,r_4)$.

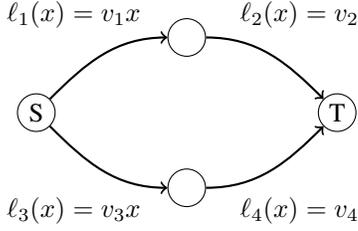
\begin{figure}[t]
    \centering
    \begin{tikzpicture}
        \node (S) [circle,draw,inner sep=0pt,minimum width=0.5cm]  at (0.0,0) {S};
        \node (T) [circle,draw,inner sep=0pt,minimum width=0.5cm] at (4.0,0) {T};
        \node (A) [circle,draw,inner sep=0pt,minimum width=0.5cm] at (2.0,1.0) {};
        \node (B) [circle,draw,inner sep=0pt,minimum width=0.5cm] at (2.0,-1.0) {};
        \node () [align=left, above] at (0.5,1.0) {$\ell_1(x) = v_1 x$};
        \node () [align=right, above] at (3.5,1.0) {$\ell_2(x) = v_2$};
        \node () [align=left, below] at (0.5,-1.0) {$\ell_3(x) = v_3 x$};
        \node () [align=right, below] at (3.5,-1.0) {$\ell_4(x) = v_4$};
        \path [black,thick,->,out=45,in=180] (S) edge (A);
        \path [black,thick,->,out=-45,in=-180]   (S) edge (B);
        \path [black,thick,->,out=0,in=135] (A) edge (T);
        \path [black,thick,->,out=0,in=-135]   (B) edge (T);
    \end{tikzpicture}
    \caption{A simple congestion game with affine latency functions that can be represented as a local resource allocation problem with two types, $T = \{(x^2, x),(x, 1)\}$. The system's $n$ agents must select either the top path or the bottom path to travel from node $S$ to node $T$, and experience the corresponding latency.}
    \label{fig:congestiongame}
\end{figure}

While we consider the simplistic example of a two-link network here, we note that, in general, any affine congestion game can be represented as a local resource allocation game with $T$ as above. 
Furthermore, given a basis set for all possible edge latency functions, \textit{any} atomic congestion game can be formulated using our model.

\subsection{Computing the price-of-anarchy}
The next theorem shows how the price-of-anarchy of a scalable class of local resource allocation games can be recovered by means of the notion of generalized smoothness previously defined in \eqref{eq:gsmooth}. Before proceeding, we introduce some notation. Let
\[
\begin{split}
    \mathcal{I} &\!:=\! \{(x,y,z) \!\in\! \mathbb{N}^3 \,|\, 1 \leq x+y-z \leq n,\> z \leq \min\{x,y\}\}, \\
    \mathcal{I_R} &\!:=\! \{(x,y,z) \!\in\! \mathcal{I} \,|\, x + y - z \!=\! n \text{ or } (x - z)(y - z)z \!=\! 0\}.
\end{split}
\]

\begin{theorem} \label{thm:worstcasegame}
    Given set of types $T$, and positive integer $n$, it holds that $\rm{PoA}(\mathcal{G}_T^n) = \rm{GPoA}(\mathcal{G}_T^n)$.
    Furthermore, there exists an $n$-player game $G \in \mathcal{G}_T^n$ with $|\mathcal{R}| \leq 2n$, and $\rm{PoA}(G) = \rm{GPoA}(\mathcal{G}_T^n)$.
\end{theorem}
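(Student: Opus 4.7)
The plan is as follows. The inequality $\textrm{PoA}(\mathcal{G}_T^n) \leq \textrm{GPoA}(\mathcal{G}_T^n)$ is already delivered by Theorem~\ref{thm:gsmooth}, so the entire substance of the proof lies in exhibiting a concrete $n$-player game $G \in \mathcal{G}_T^n$ with $|\mathcal{R}|\leq 2n$ whose price-of-anarchy already achieves $\textrm{GPoA}(\mathcal{G}_T^n)$; this single construction yields both the tightness claim and the cardinality bound at once.

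First, I would recast $\textrm{GPoA}(\mathcal{G}_T^n)$ as a finite linear program by exploiting scalability. Inequality \eqref{eq:gsmooth} decomposes resource-by-resource: writing $x = |a|_r$, $y=|a'|_r$, $z = |a\cap a'|_r$ for a resource $r$ of type $t$ and value $v_r$, its contribution to the left-hand side of \eqref{eq:gsmooth} equals $v_r\bigl[(z-x)f_t(x) + (y-z)f_t(x+1) + c_t(x)\bigr]$ and its contribution to the right-hand side equals $v_r\bigl[\lambda c_t(y) + \mu c_t(x)\bigr]$. Since the $v_r \geq 0$ are free and $(x,y,z)$ exhausts $\mathcal{I}$ as games vary, \eqref{eq:gsmooth} holds across the entire class if and only if the scalar inequality
\[ (z-x)f_t(x) + (y-z)f_t(x+1) + c_t(x) \leq \lambda\, c_t(y) + \mu\, c_t(x) \]
holds for every $t\in T$ and $(x,y,z) \in \mathcal{I}$. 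Hence $\textrm{GPoA}(\mathcal{G}_T^n)$ equals the optimum of an LP in $(\lambda,\mu)$. Dualizing introduces nonnegative multipliers $\theta_t(x,y,z)$ whose natural interpretation is the aggregate value of resources of type $t$ in configuration $(x,y,z)$; the dual constraints express a normalization of $C(a')$ and the aggregate inequality $\sum_i J_i(a) \leq \sum_i J_i(a_i',a_{-i})$. By strong LP duality, the dual optimum equals $\textrm{GPoA}(\mathcal{G}_T^n)$, and a standard extreme-point argument on the dual polytope lets one choose an optimal $\theta^\star$ whose support lies in $\mathcal{I}_R$ — each active triple either saturates the resource ($x+y-z=n$) or kills one of the three sub-populations $x-z$, $y-z$, $z$.

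The remaining and most delicate step is to promote $\theta^\star$ to an actual game $G^\star$. I would take $\mathcal{A}_i = \{a^{\mathrm{ne}}_i,a^{\mathrm{opt}}_i\}$ for every agent, with each action consisting of a single resource, so that $|\mathcal{R}| \leq 2n$ by a trivial count. For each $(t,(x,y,z))$ in $\mathrm{supp}(\theta^\star)$, I would realize the prescribed occupancy on a family of resources obtained by cyclic shifts of a base pattern on the agent set, with total value matched to $\theta^\star_t(x,y,z)$. The main obstacle is ensuring that $a^{\mathrm{ne}}$ satisfies the per-agent equilibrium conditions \eqref{eq:equilibriumconditions}, not merely the aggregate inequality supplied by LP duality. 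The two-action restriction reduces the only admissible unilateral deviation to $a^{\mathrm{opt}}_i$, and the rotational symmetry of the construction ensures that every agent sees the same multiset of roles, so that the aggregate Nash inequality descends uniformly to each agent. Complementary slackness between $\theta^\star$ and the primal optimum then forces $C(a^{\mathrm{ne}})/C(a^{\mathrm{opt}}) = \textrm{GPoA}(\mathcal{G}_T^n)$, yielding both claims of the theorem.
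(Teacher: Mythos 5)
Your overall strategy---decompose \eqref{eq:gsmooth} resource-by-resource into a finite family of scalar constraints indexed by a type and a triple $(x,y,z)$, locate the constraints that bind at the optimum, and realize them as an explicit cyclic game with two actions per agent---is essentially the paper's, with LP duality standing in for the paper's explicit geometric analysis of the two-dimensional $(\lambda,\mu)$-polyhedron. But the construction step, which you yourself identify as the crux, does not work as stated. You take each action to be a \emph{single} resource and conclude $|\mathcal{R}|\leq 2n$ ``by a trivial count.'' With single-resource actions the number of agent--resource incidences equals the number of agents in every profile, so a configuration in which every resource has equilibrium occupancy $x$ and optimal occupancy $y$ with $x\neq y$ is unrealizable; realizing a triple with $x,y>1$ forces each agent's equilibrium action to contain $x$ resources and its optimal action $y$ resources (the paper uses blocks of consecutive resources on a cycle of length $\min\{x+y,n\}$). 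The bound $|\mathcal{R}|\leq 2n$ therefore does not come from counting two singleton actions per agent; it comes from the fact that the optimum of a program in the two variables $(\lambda,\mu)$ is determined by at most \emph{two} active constraints, so the dual support $\theta^\star$ can be taken to have size at most two, each active triple being realized on its own cycle of at most $n$ resources. You never establish this support bound, and without it your ``family of resources for each element of $\mathrm{supp}(\theta^\star)$'' could require arbitrarily many cycles, destroying the cardinality claim. (Your symmetry argument for descending the aggregate Nash inequality to each agent is fine once the construction is repaired, since all agents then play identical roles.)

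Two further points need attention. First, you implicitly assume the infimum defining $\mathrm{GPoA}$ is attained; because the feasible region in $(\lambda,\mu)$ is unbounded, the infimum of $\lambda/(1-\mu)$ may only be approached as $\mu\to-\infty$, and the paper devotes a separate lemma and a separate one-cycle construction to that degenerate case (where the worst-case ratio collapses to $c(x)/c(y)$ for a single type). A Charnes--Cooper-style linearization as in \cref{thm:equivalencetolp} could fold this into your duality argument, but your ``standard extreme-point argument'' as written does not cover it. Second, the restriction of the binding constraints to $\mathcal{I_R}$ rather than all of $\mathcal{I}$ is not a routine extreme-point observation---the paper imports it from prior work---and your ``if and only if'' between class-wide generalized smoothness and the scalar constraints also requires exhibiting, for each relevant triple, a game and a pair of profiles realizing it; this is easy with the cyclic gadget but is not automatic from the decomposition alone.
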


The above theorem shows that generalized smoothness arguments provide tight upper-bounds on the price-of-anarchy in local resource allocation games, and proposes a methodology for constructing worst-case instances. We now exploit this result to obtain easily computable and concrete bounds on the price-of-anarchy.
\begin{theorem} \label{thm:equivalencetolp}
    Given set of types $T$, and positive integer $n$, $\rm{PoA}(\mathcal{G}_T^n) = 1/C^*$, where $C^*$ is the value of the following linear program,
    \begin{align} \label{eq:multiple_pairs}
            C^* = & \max_{\nu \in \mathbb{R}_{\geq 0}, \rho \in \mathbb{R}}\rho \\
            \text{s.t. } & c(y) - \rho c(x)+\nu\left[(x{-}z)f(x)-(y{-}z)f(x{+}1)\right] \geq 0 \nonumber \\
            & \hspace*{100pt} \forall (c, f) \in T \text{, } \forall (x,y,z) \in \mathcal{I_R}, \nonumber
    \end{align}
    where we set $c(0)=f(0)=f(n+1)=0$, for all $(c, f) \in T$.
\end{theorem}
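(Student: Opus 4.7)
The plan is to establish $\rm{PoA}(\mathcal{G}_T^n) = 1/C^*$ via LP duality, with Theorem~\ref{thm:worstcasegame} providing the game-realization step that equates a suitable primal LP with $1/\rm{PoA}(\mathcal{G}_T^n)$.

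First, I would decompose the relevant quantities resource-by-resource. For any $G \in \mathcal{G}_T^n$ and allocations $a, a' \in \mathcal{A}$, parameterize each resource $r$ (with type $(c_r, f_r) = v_r(c,f)$) by the triple $(x_r, y_r, z_r) := (|a|_r, |a'|_r, |\{i : r \in a_i \cap a_i'\}|)$. A direct calculation then gives $C(a) = \sum_r v_r c(x_r)$, $C(a') = \sum_r v_r c(y_r)$, and summing the Nash inequality $J_i(a^\textrm{ne}) \leq J_i(a^\textrm{opt}_i, a^\textrm{ne}_{-i})$ over $i$ yields $\sum_r v_r[(x_r-z_r)f(x_r) - (y_r-z_r)f(x_r+1)] \leq 0$. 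Aggregating into nonnegative variables $\theta_{(c,f),(x,y,z)}$ indexed by $(c,f)\in T$ and $(x,y,z) \in \mathcal{I_R}$, and normalizing $C(a^\textrm{ne}) = 1$, one arrives at the primal LP
\[
\min_{\theta \geq 0}\; \sum c(y)\,\theta \quad \text{s.t.} \quad \sum c(x)\,\theta = 1, \quad \sum\bigl[(x-z)f(x) - (y-z)f(x+1)\bigr]\,\theta \leq 0,
\]
whose objective evaluated at the $\theta$ induced by a game $G$ with $a' = a^\textrm{opt}$ equals $C(a^\textrm{opt})/C(a^\textrm{ne}) = 1/\rm{PoA}(G)$.

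Applying standard LP dualization, with $\rho \in \mathbb{R}$ and $\nu \geq 0$ dual to the equality and the inequality respectively, yields exactly the LP \eqref{eq:multiple_pairs} with optimal value $C^*$. The substitution $\nu = 1/\lambda$, $\rho = (1-\mu)/\lambda$ further identifies each dual constraint as a rearranged per-resource instance of \eqref{eq:gsmooth}, connecting the dual with the generalized smoothness framework of Theorem~\ref{thm:gsmooth}. Strong LP duality then gives primal value $= C^*$; the equality $C^* = 1/\rm{PoA}(\mathcal{G}_T^n)$ follows once one establishes that primal value $= 1/\rm{PoA}(\mathcal{G}_T^n)$.

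The main obstacle is this game-realization step. On one side, the worst-case game from Theorem~\ref{thm:worstcasegame}, with triples in $\mathcal{I_R}$ and $|\mathcal{R}| \leq 2n$, induces a feasible $\theta$ of objective $1/\rm{PoA}(\mathcal{G}_T^n)$, so primal value $\leq 1/\rm{PoA}(\mathcal{G}_T^n)$. On the other, for any primal-feasible $\theta$ (in particular a primal-optimal $\theta^*$), one must construct an $n$-player game $G^\theta \in \mathcal{G}_T^n$ with $|\mathcal{R}| \leq 2n$ whose NE and optimum reproduce the triple counts and scalings encoded by $\theta$, so that $\rm{PoA}(G^\theta) \geq 1/\text{obj}(\theta)$ and hence primal value $\geq 1/\rm{PoA}(\mathcal{G}_T^n)$. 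Both directions rely on the constructive content of Theorem~\ref{thm:worstcasegame}, which handles the delicate combinatorics of embedding the triples in $\mathcal{I_R}$ into valid action sets while preserving the Nash and optimality conditions.
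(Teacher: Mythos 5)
Your proposal is correct in substance but takes a genuinely different route from the paper. The paper's proof of \cref{thm:equivalencetolp} is a two-step change of variables: it observes that under $\rho = (1-\mu)/\lambda$, $\nu = 1/\lambda$ the constraints of \eqref{eq:multiple_pairs} are exactly the per-resource generalized-smoothness conditions \eqref{eq:ind_cons} restricted to $\mathcal{I_R}$, and that maximizing $\rho$ is the same as minimizing $\lambda/(1-\mu)$; hence $C^* = 1/\gamma(\mathcal{G}_T^n)$, and the conclusion follows because the proof of \cref{thm:worstcasegame} already establishes $\gamma(\mathcal{G}_T^n) = \rm{GPoA}(\mathcal{G}_T^n) = \rm{PoA}(\mathcal{G}_T^n)$. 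You instead introduce a primal LP over resource-type ``frequencies'' $\theta$, derive \eqref{eq:multiple_pairs} as its dual, and close the argument with strong duality plus game realization of the primal optimum; your resource-by-resource decomposition and summed Nash inequality are exactly the computation in \eqref{eq:additive_nature} from \cref{lem:gammaisstricter}, and your dual is computed correctly. This primal--dual view is the one used in the companion works \cite{paccagnan2018distributed,chandan2019optimal}, and it buys an interpretation of the primal optimum as a fractional worst-case instance; the paper's route is shorter because it never needs the primal at all. The one soft spot in your write-up is the realization step for a primal-optimal $\theta^*$: \cref{thm:worstcasegame} as stated realizes $\gamma(\mathcal{G}_T^n)$, not an arbitrary feasible $\theta$, so you would either need complementary slackness to argue that $\theta^*$ can be supported on the (at most two) binding constraints of \cref{lem:optimalityparameters} and then reuse that construction, or simply drop the primal and argue directly that $C^* = 1/\gamma(\mathcal{G}_T^n)$ via the substitution, after which \cref{thm:worstcasegame} finishes the job. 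You should also note that triples induced by a general game live in $\mathcal{I}$ rather than $\mathcal{I_R}$; the reduction to $\mathcal{I_R}$ is inherited from \cite[Thm.~2]{chandan2019optimal} and is needed to justify indexing the primal only over $\mathcal{I_R}$.
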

%
Although all $(c, f) \in T$ were previously defined as mappings from $N\to\mathbb{R}$, we extend their definitions to ease the notation.
Note that the result in the above theorem can be used to derive exact price-of-anarchy bounds in, e.g., atomic congestion games, see \cref{tab:impossibilityresult} in \cref{sec:illustrative_example}.

\subsection{Optimizing the price-of-anarchy}%
Whereas the previous subsection was devoted to calculating the price-of-anarchy for given set of types $T$, we now shift our focus to designing a set of distribution rules that minimize the price-of-anarchy.
\begin{theorem} \label{thm:optimaldistributionrules}
    Consider the cost functions $\{c_1, \dots, c_m\}$, and positive integer $n$. 
    An optimal set of distribution rules $\mathbf{f}_\mathrm{OPT} = \{f^*_1, \dots, f^*_m\}$ such that
    \begin{equation}
        \mathbf{f}_\mathrm{OPT} \in \argmin_{\mathbf{f} \in \mathbb{R}^{n \times m}_{\geq 0}} \emph{PoA}(\mathcal{G}_T^n),
    \end{equation}
    is given by the solutions to
    \begin{align}
            (f^*_t, &\rho^*_t) \in \argmax_{f \in \mathbb{R}^n, \rho \in \mathbb{R}} \rho \label{eq:opt_distr_rule} \\
            \text{s.t. } & c_t(y) - \rho c_t(x) + (x{-}z)f(x) - (y{-}z)f(x{+}1) \geq 0,             \nonumber \\
            & \hspace*{70pt} \forall (x,y,z) \in \mathcal{I_R}, \forall t \in \{1, \dots, m\},    \nonumber
    \end{align}
    where we set $c(0)=f(0)=f(n+1)=0$, for all $(c, f) \in T$. For the set of types $T^* = \{(c_t, f^*_t)\}_{t=1}^m$,
    \begin{equation}
        \rm{PoA}(\mathcal{G}^n_{T^*}) = \max_{t \in \{1, \dots, m\}} \frac{1}{\rho^*_t}.
    \end{equation} 
\end{theorem}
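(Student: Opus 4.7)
The plan is to leverage \cref{thm:equivalencetolp} and show that jointly optimizing over the distribution rules $\mathbf{f}$ separates into $m$ independent single-type linear programs after a suitable change of variables. By \cref{thm:equivalencetolp}, for any fixed $\mathbf{f} = \{f_1,\ldots,f_m\}$ we have $\text{PoA}(\mathcal{G}_T^n) = 1/C^*(\mathbf{f})$, where $C^*(\mathbf{f})$ is the optimum of \eqref{eq:multiple_pairs} instantiated at $T = \{(c_t, f_t)\}_t$. Consequently, $\min_{\mathbf{f}} \text{PoA}(\mathcal{G}_T^n) = 1/\max_{\mathbf{f}} C^*(\mathbf{f})$, and the joint maximization unfolds into a single program in the variables $(\mathbf{f}, \nu, \rho)$ with $\nu \geq 0$ and constraints
\[
c_t(y) - \rho\, c_t(x) + \nu\bigl[(x-z) f_t(x) - (y-z) f_t(x+1)\bigr] \geq 0
\]
for all $t \in \{1,\ldots,m\}$ and $(x,y,z) \in \mathcal{I_R}$.

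The central step is the substitution $g_t := \nu f_t$, applied separately for each type. For $\nu > 0$ this is a bijective reparameterization, and each constraint becomes linear in $(g_t, \rho)$,
\[
c_t(y) - \rho\, c_t(x) + (x-z)\, g_t(x) - (y-z)\, g_t(x+1) \geq 0,
\]
eliminating the bilinear coupling between $\nu$ and $f_t$. The corner case $\nu = 0$ collapses the constraint to $c_t(y) \geq \rho\, c_t(x)$, which is recovered by taking $g_t \equiv 0$ in the reparameterized program, so restricting to $\nu > 0$ (equivalently, fixing $\nu = 1$) is without loss of generality.

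After this reparameterization, the crucial observation is that the constraints for distinct types $t$ share only the single scalar $\rho$, while involving disjoint sets of $g_t$-variables. Hence a value of $\rho$ is jointly feasible if and only if it is feasible in each of the $m$ single-type programs \eqref{eq:opt_distr_rule} (with $g_t$ playing the role of $f$), which yields $\rho^* = \min_t \rho_t^*$. An optimal $\mathbf{f}_\mathrm{OPT}$ is then recovered by setting $f_t^* = g_t^*$ from each per-type LP, giving the stated expression $\text{PoA}(\mathcal{G}_{T^*}^n) = 1/\rho^* = \max_t 1/\rho_t^*$.

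The main obstacle, and really the only technical subtlety, lies in the bookkeeping around the bilinear substitution $g_t = \nu f_t$---in particular, verifying that the $\nu = 0$ corner cannot strictly improve the joint optimum and that absorbing $\nu$ into the per-type variables preserves feasibility (any positive scaling required to normalize $f_t^*$ can be carried by the free parameter $\nu$). Once this is in place, the decoupling step itself is essentially definitional, since the cross-type constraints share only the scalar $\rho$.
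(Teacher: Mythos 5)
Your proof is correct, and it hinges on the same key mechanism as the paper's, but it packages that mechanism as a single LP-decomposition argument rather than the paper's two-lemma structure. The paper first proves a lower bound (\cref{lem:poagreaterthanequal}: embedding each single-type worst-case game into the multi-type class shows $\mathrm{PoA}(\mathcal{G}_T^n) \geq \max_{\mathbf{t}} \mathrm{PoA}(\mathcal{G}_{\mathbf{t}}^n)$ for any $\mathbf{f}$), then a matching upper bound (\cref{lem:scalingparameters}: rescaling each $f_t$ by its own single-type multiplier $\nu_t^*$ makes a common $(\hat{\rho},1)$ feasible in \eqref{eq:multiple_pairs}), and finally cites the per-type optimality of $f_t^*$ from prior work. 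Your substitution $g_t = \nu f_t$ is precisely the content of the paper's scaling lemma, but by performing it inside the joint maximization over $(\mathbf{f},\nu,\rho)$ you obtain both directions at once: the decoupled program simultaneously shows that no $\mathbf{f}$ can achieve $C^* > \min_t \rho_t^*$ and that $T^*$ attains this value, which renders the separate embedding lemma and the external citation unnecessary. This is a cleaner and more self-contained route; what the paper's version buys is an explicit worst-case-instance interpretation of the lower bound. Two points you should make explicit: (i) the monotonicity you use implicitly --- that each per-type optimizer $(f_t^*,\rho_t^*)$ remains feasible at the common value $\rho = \min_t \rho_t^*$ --- requires $c_t \geq 0$; and (ii) the theorem restricts $\mathbf{f}$ to $\mathbb{R}^{n\times m}_{\geq 0}$ while \eqref{eq:opt_distr_rule} optimizes over $f \in \mathbb{R}^n$, so the decoupled per-type programs should carry the same sign constraint for your equivalence to be exact (an inconsistency already present in the paper's own statement, so it does not count against your argument).
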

%
%
\cref{thm:optimaldistributionrules} shows that a set of optimal distribution rules can be calculated using the linear program \eqref{eq:opt_distr_rule}.
It is worth noting that for a given class of games $\mathcal{G}^n_T$ with an arbitrary set of types $T$, it is not possible, in general, to compute the price-of-anarchy as the worst price-of-anarchy over each individual pair $(c_t,f_t)$, i.e. the expression 
\[ \textrm{PoA}(\mathcal{G}_T^n) = \max_{(c,f) \in T}\left\{\rm{PoA}\left(\mathcal{G}_{\{(c,f)\}}^n\right)\right\} \] 
does \textit{not hold}. Nevertheless, this property is recovered for the specific choice of $f_t=f^*_t$. This constitutes the key observation towards proving \cref{thm:optimaldistributionrules}.
\subsection{Returning to Atomic Congestion Games} \label{sec:illustrative_example}%
Here we apply the results presented in this section to the class of congestion games, as discussed in \cref{ss:example}.

\vspace*{2mm}\noindent\textit{Characterizing PoA in congestion games.} 
Deriving the smoothness parameters for a given class of congestion games is difficult. 
For example, in the main result of \cite{christodoulou2005price}, the authors exploit a nontrivial polynomial inequality in order to find the optimal smoothness parameters for the class of affine congestion games, and prove that the price-of-anarchy of this class is $5/2$. 
A direct application of the linear program in \cref{thm:equivalencetolp} recovers the same result for any number of agents greater than 3. Additionally, we determine a worst-case instance construction with only three agents\footnote{%
Consider the game $G$ with six edges $\{e_i\}_{i=1}^6$ with identical value (i.e. $v_e = v$) and latency function $\ell_e(x) = x$. We endow the $n=3$ agents with the action sets, $\mathcal{A}_1 = \{(e_4,e_5,e_6), (e_1,e_2)\}$, $\mathcal{A}_2 = \{(e_1,e_2,e_5), (e_3,e_4)\}$, and $\mathcal{A}_3 = \{(e_1,e_3,e_4), (e_5,e_6)\}$.
The Nash equilibrium $a^\mathrm{ne}$ corresponds to each agent selecting its three-tuple action, and the optimal actions in $a^\mathrm{opt}$ are the two-tuple actions. 
It can easily be verified that \eqref{eq:equilibriumconditions} is met. $\rm{PoA}(G) = 5/2$, since the system costs are $C(a^\mathrm{ne}) = 15v$ and $C(a^\mathrm{opt}) = 6v$. 
Note that, in general, drawing a worst-case instance as a graph requires additional edges $e$ with value $v_e = 0$.}.
In \cref{tab:impossibilityresult}, we compile price-of-anarchy bounds obtained using the linear program in \cref{thm:equivalencetolp} for five classes of atomic congestion games, along with their corresponding optimal smoothness parameters. 
We note that, while the price-of-anarchy bounds that we present for the first three classes of congestion games (i.e. affine, quadratic, and cubic) have already been obtained (see \cite{christodoulou2005price,awerbuch2005price,aland2006exact}), the bounds reported for classes of square root, and logarithmic congestion games are novel.
\begin{table}[t]
    \centering
    \vspace*{4pt}
    \caption{Exact price-of-anarchy bounds, and optimal smoothness parameters, for five classes of atomic congestion games. These were computed using the linear program in \cref{thm:equivalencetolp} for $n=25$. The bounds we obtain for affine, quadratic, and cubic congestion games match the tight bounds obtained for infinite player games in~\cite{christodoulou2005price,awerbuch2005price,aland2006exact}. To the best of our knowledge, we are the first to report bounds for square root, and logarithmic congestion games.}
    \label{tab:impossibilityresult}
    \begin{tabular}{|c|c|c|c|c|c|}
        \hline
        Class & Basis & $\lambda^*$ & $\mu^*$ & $\rm{PoA}$ & Reference \\
        \hline
        Affine & $\{x,1\}$ & $5/3$ & $1/3$ & $5/2$ & \cite{christodoulou2005price,awerbuch2005price} \\
        Quadratic & $\{x^2,x,1\}$ & $6.05$ & $0.368$ & $9.58$ & \cite{aland2006exact} \\
        Cubic & $\{x^3,\dots,1\}$ & $17.89$ & $0.569$ & $41.5$ & \cite{aland2006exact} \\
        Square Root & $\{\sqrt{x}\}$ & $1.24$ & $0.174$ & $1.50$ & -- \\
        Logarithmic & $\{\log(x) {+} 1\}$ & $1.523$ & $0.17$ & $1.835$ & -- \\
        \hline
    \end{tabular}
\end{table}

\vspace*{2mm}\noindent\textit{Optimizing PoA in congestion games.}
The idea of improving the price-of-anarchy in congestion games using a local edge toll as a control mechanism originates from \cite{yang2005mathematical}. 
In order to influence agents' decisions, a toll $\tau_e$ is added to the agents' local edge costs such that,
\[ J_i(a_i,a_{-i}) = \sum_{e \in a_i} \ell_e(|a|_e) + \tau_e(|a|_e). \]
The system cost remains unchanged. 
When traffic in a network is modelled as a continuum of agents -- termed the \textit{nonatomic} setting -- it has been shown that there is a unique system equilibrium achieving $\rm{PoA}(\mathcal{G}) = 1$ when agents are optimally tolled~\cite{pigou1920economics}. 
In the setting of atomic congestion games, where the agents are modelled as finite, indivisible entities, the problem of minimizing the price-of-anarchy using tolls becomes much more challenging, as a given game will have multiple equilibria, in general.
The approach presented here can be used to design a set of tolls $\tau_e$ that minimize the price-of-anarchy, see \cite{chandan2019computing}.

\section{Extensions}
In this section, we demonstrate that the above results extend to \textit{A. Coarse-Correlated Equilibria}; and, \textit{B. Welfare-Maximization Problems}.

\subsection{Coarse-Correlated Equilibria} \label{sec:cce}
A significant advantage of using a smoothness argument is that it provides performance bounds for the class of \textit{coarse-correlated equilibria}, a far broader class of equilibria compared to the class of pure Nash equilibria \cite{roughgarden2009intrinsic,moulin1978strategically}. A coarse-correlated equilibrium is a probability distribution $\sigma$ over all allocations $a \in \mathcal{A}$ such that for all $i \in [n]$, and $a' \in \mathcal{A}$, it holds that,
\[\mathbb{E}_{a \sim \sigma}[J_i(a)] = \sum_{a \in \mathcal{A}} \sigma(a) J_i(a) \leq \mathbb{E}_{a\sim\sigma}[J_i(a_i', a_{-i})], \]
where $\sigma(a) \in [0,1]$ is the probability associated with action $a \in \mathcal{A}$ in the distribution $\sigma$.
Next, we show that the price-of-anarchy bounds stemming from generalized smoothness arguments extend to all coarse-correlated equilibria.
 
\begin{lemma} \label{thm:gsmoothcce}
For every game $G$ in the class of games $\mathcal{G}$,
\[ \rm{GPoA}(G) \geq \frac{\max_{\sigma \in \rm{CCE}(G)} \mathbb{E}_{a \sim \sigma} [C(a)]}{\min_{a \in \mathcal{A}} C(a)}, \]
where $\rm{CCE}(G)$ is the set of all coarse-correlated equilibria of the game $G$.
\end{lemma}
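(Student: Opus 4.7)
The plan is to mirror the proof of \cref{thm:gsmooth}, but with a coarse-correlated equilibrium distribution $\sigma$ in place of a pure Nash equilibrium, taking expectations where appropriate. The generalized smoothness inequality \eqref{eq:gsmooth} holds pointwise for every pair $(a,a') \in \mathcal{A} \times \mathcal{A}$, so it will continue to hold in expectation once one of the arguments is randomized.

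First, I would fix an arbitrary $\sigma \in \text{CCE}(G)$ and an optimal allocation $a^{\text{opt}} \in \argmin_{a \in \mathcal{A}} C(a)$. Applying \eqref{eq:gsmooth} pointwise with action profile $a$ (to be drawn from $\sigma$) and deviation $a' = a^{\text{opt}}$, then taking expectations over $a \sim \sigma$, yields
\[
\mathbb{E}_{a \sim \sigma}\!\left[\sum_{i=1}^n J_i(a_i^{\text{opt}}, a_{-i}) - \sum_{i=1}^n J_i(a) + C(a)\right] \leq \lambda\, C(a^{\text{opt}}) + \mu\, \mathbb{E}_{a \sim \sigma}[C(a)],
\]
for any $(\lambda, \mu)$ satisfying the generalized smoothness conditions for $G$.

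Second, I would invoke the defining property of a coarse-correlated equilibrium: for each agent $i$ and each fixed unilateral deviation $a_i^{\text{opt}}$,
\[
\mathbb{E}_{a \sim \sigma}[J_i(a)] \leq \mathbb{E}_{a \sim \sigma}[J_i(a_i^{\text{opt}}, a_{-i})].
\]
Summing over $i \in N$, the combined quantity $\mathbb{E}_{a \sim \sigma}\!\left[\sum_i J_i(a_i^{\text{opt}}, a_{-i}) - \sum_i J_i(a)\right]$ is nonnegative. Consequently, $\mathbb{E}_{a \sim \sigma}[C(a)]$ is upper-bounded by the left-hand side of the displayed inequality, and therefore by $\lambda\, C(a^{\text{opt}}) + \mu\, \mathbb{E}_{a \sim \sigma}[C(a)]$.

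Finally, rearranging gives $(1-\mu)\,\mathbb{E}_{a \sim \sigma}[C(a)] \leq \lambda\, C(a^{\text{opt}})$, and dividing by $(1-\mu)\,C(a^{\text{opt}}) > 0$ yields
\[
\frac{\mathbb{E}_{a \sim \sigma}[C(a)]}{C(a^{\text{opt}})} \leq \frac{\lambda}{1-\mu}.
\]
Taking the supremum over $\sigma \in \text{CCE}(G)$ on the left and the infimum over admissible $(\lambda,\mu)$ on the right delivers the claim. There is no real obstacle here; the only subtlety worth noting is that \eqref{eq:gsmooth} is a deterministic inequality over $\mathcal{A} \times \mathcal{A}$, so taking expectations on only one side is legitimate, and the CCE condition applies precisely to deviations that do not depend on the randomness of $a$, which is why choosing $a'_i = a_i^{\text{opt}}$ (a constant) is exactly what makes the sum of equilibrium inequalities usable.
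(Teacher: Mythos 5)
Your proof is correct and is exactly the intended argument: the paper omits a written proof of this lemma, but the natural extension of the proof of \cref{thm:gsmooth} — applying \eqref{eq:gsmooth} pointwise with $a'=a^{\mathrm{opt}}$, taking expectations over $a\sim\sigma$, and using the summed coarse-correlated equilibrium inequalities in place of \eqref{eq:equilibriumconditions} — is precisely what you wrote, and your remark that the CCE condition applies because the deviation $a_i^{\mathrm{opt}}$ is deterministic is the right (and only) subtlety.
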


Since the sets of pure and mixed Nash equilibria of a game are subsets of its coarse-correlated equilibria, the $\textrm{GPoA}$ is an upper-bound on the efficiency of \textit{all} equilibria within these classes. 
This result is particularly important toward the tractability of the final algorithm. Indeed, although finding a pure Nash equilibrium can be intractable, coarse-correlated equilibria can often be computed in polynomial time \cite{papadimitriou2008computing}. 

\subsection{Welfare-Maximization Problems} \label{sec:wm}
Welfare-maximization problems consist of an agent set $N$, where each agent $i \in N$ is associated with a finite action set $\mathcal{A}_i$. The global objective in to maximize the system welfare function $W:\mathcal{A} \to \mathbb{R}$, i.e. we wish to find the allocation $a^{\textrm{opt}} \in \argmax_{a \in \mathcal{A}} W(a)$. As in the previous sections, we consider a game-theoretic model where each agent $i \in N$ is associated with a local utility function $U_i : \mathcal{A} \to \mathbb{R}$ which it uses to evaluate its own actions. We represent a welfare-maximization game with a tuple $G = (N, \mathcal{A}, W, \{U_i\})$. 

Given a game $G$, a Nash equilibrium is defined as any allocation $a^\textrm{ne} \in \mathcal{A}$ such that $U_i(a^\textrm{ne}) \geq U_i(a_i, a_{-i}^\textrm{ne})$ for all $a_i \in \mathcal{A}_i$ and all $i \in N$.
The price-of-anarchy in welfare-maximization games is defined as
\[ \rm{PoA}(G) = \frac{\min_{a \in \rm{NE}(G)} W(a)}{\max_{a \in \mathcal{A}} W(a)}, \quad \rm{PoA}(\mathcal{G}) = \inf_{G \in \mathcal{G}} \rm{PoA}(G).\]
Note that according to this definition, the price-of-anarchy in welfare-maximization games is bounded from below by $0$, and from above by $1$.

Although the focus of this paper is on cost-minimization games, we note that analogues to all of our results can be derived in the context of welfare-maximization games with minor modification to the generalized smoothness condition.
A welfare-maximization game $G$ is ($\lambda,\mu$)-generalized smooth if, for all allocations $a, a' \in \mathcal{A}$, it holds that,
\[ \sum_{i \in [n]} U_i(a_i', a_{-i}) - \sum_{i \in [n]} U_i(a) + W(a) \geq \lambda W(a') - \mu W(a). \]
The price-of-anarchy of a ($\lambda,\mu$)-generalized smooth game is \textit{lower-bounded} by $\lambda/(1+\mu)$, for this class of problems.


\section{Conclusions}
    In this manuscript, we provided a novel methodology for characterizing and optimizing the price-of-anarchy in connection to a broad class of problems, including congestion games. Toward this goal, we introduced the notion of generalized smoothness. Compared to traditional smoothness arguments, we showed that generalized smoothness is more widely applicable, and provides tighter bounds. We applied generalized smoothness arguments to the class of local resource allocation problems (which include congestion games) and observed that it provides tight bounds on the price-of-anarchy. Relative to this class of problems, we were able to compute and optimize the price-of-anarchy of coarse-correlated equilibria, by means of concrete and tractable linear programs. Along with other possible future research directions, this work paves the way for the design of optimal tolling schemes through the linear programming framework introduced in \cite{paccagnan2018distributed,chandan2019optimal}.

\bibliographystyle{IEEEtran}
\bibliography{references}
\null \vfill
\appendix
\section{Appendix}

\noindent\textbf{Proof of \cref{lem:gsmoothbetter}}
\begin{proof}
    When $\sum_{i \in N} J_i(a) = C(a)$, \eqref{eq:gsmooth} is equivalent to \eqref{eq:smoothnessconditions}. 
    When $\sum_{i \in N} J_i(a) > C(a)$, for all $\lambda, \mu$ satisfying \eqref{eq:smoothnessconditions}, the following must hold for all $a, a^* \in \mathcal{A}$,
    \[ \sum_{i \in [n]}\!\! J_i(a_i^*,a_{-i})\! -\!\!\! \sum_{i \in [n]}\!\! J_i(a)\! +\! C(a) < \lambda C(a^*) \!+\! \mu C(a). \]
    Thus, there must exist some $\epsilon > 0$ such that \eqref{eq:gsmooth} holds for $\bar\lambda = \lambda^* - \epsilon$ or $\bar\mu = \mu^* - \epsilon$, where $\lambda^*, \mu^*$ optimize \eqref{eq:rpoa}. Since $\lambda/(1-\mu)$ is increasing in both $\lambda$ and $\mu$, $\textrm{GPoA} < \lambda^*/(1-\mu^*) = \textrm{RPoA}$.
\end{proof}


\noindent\textbf{Preliminaries to the proof of \cref{thm:worstcasegame}}

\begin{definition} \label{def:setS}
$\mathcal{S}(\mathcal{G}_T^n)$ is the set of parameters $\lambda > 0$, $\mu < 1$ such that, for all $(c^t, f^t) \in T$ and all $(x,y,z) \in \mathcal{I_R}$,
\begin{equation} \label{eq:ind_cons}
    (z-x)f^t(x) + (y-z)f^t(x+1) + c^t(x) \leq \lambda c^t(y) + \mu c^t(x).
\end{equation}
\end{definition}

\begin{definition}
$\gamma (\mathcal{G}_T^n)$ is defined as,
\begin{equation}
    \gamma (\mathcal{G}_T^n) := \inf_{\lambda,\mu} \left\{\frac{\lambda}{1-\mu} : (\lambda, \mu) \in \mathcal{S}(\mathcal{G}_T^n)\right\} 
\end{equation}
\end{definition}

\begin{lemma} \label{lem:gammaisstricter}
    For the given class of games $\mathcal{G}_T^n$,
    \[ \gamma(\mathcal{G}_T^n) \geq \emph{GPoA}(\mathcal{G}_T^n). \]
\end{lemma}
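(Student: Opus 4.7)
The plan is to prove the set inclusion $\mathcal{S}(\mathcal{G}_T^n) \subseteq \mathcal{P}(\mathcal{G}_T^n)$, where $\mathcal{P}(\mathcal{G}_T^n) := \{(\lambda,\mu) : \lambda > 0, \mu < 1, \eqref{eq:gsmooth} \text{ holds for all } G \in \mathcal{G}_T^n\}$. Since the infimum of $\lambda/(1-\mu)$ taken over a smaller set is at least as large as the infimum over a larger set, this inclusion would immediately yield $\gamma(\mathcal{G}_T^n) \geq \textrm{GPoA}(\mathcal{G}_T^n)$.

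First, I would fix $(\lambda,\mu) \in \mathcal{S}(\mathcal{G}_T^n)$, a game $G \in \mathcal{G}_T^n$, and any two allocations $a, a' \in \mathcal{A}$, and reduce \eqref{eq:gsmooth} to a sum of per-resource inequalities. For each $r \in \mathcal{R}$, set $x_r := |a|_r$, $y_r := |a'|_r$, and $z_r := |\{i \in N : r \in a_i \cap a_i'\}|$. A direct count of the agents selecting $r$ in the profile $(a_i', a_{-i})$ shows that $z_r$ of them see congestion $x_r$ while the remaining $y_r - z_r$ see congestion $x_r + 1$, giving $\sum_i J_i(a_i', a_{-i}) = \sum_r [z_r f_r(x_r) + (y_r - z_r) f_r(x_r + 1)]$. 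Combining this with $\sum_i J_i(a) = \sum_r x_r f_r(x_r)$ and $C(a) = \sum_r c_r(x_r)$, invoking scalability $c_r = v_r c^{t(r)}$, $f_r = v_r f^{t(r)}$, and using $v_r \geq 0$, the inequality \eqref{eq:gsmooth} reduces to establishing, for every resource $r$, the per-resource inequality
\begin{equation*}
(z_r{-}x_r) f^{t(r)}(x_r) + (y_r{-}z_r) f^{t(r)}(x_r{+}1) + c^{t(r)}(x_r) \leq \lambda c^{t(r)}(y_r) + \mu c^{t(r)}(x_r).
\end{equation*}

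By construction $(x_r, y_r, z_r) \in \mathcal{I}$; the remaining task is to bridge the gap to $\mathcal{I_R}$, since $\mathcal{S}$ only enforces \eqref{eq:ind_cons} over the latter. The key structural observation I would invoke is that, with $x$ and $y$ fixed, the left-hand side of \eqref{eq:ind_cons} is affine in $z$ while the right-hand side is independent of $z$. For any admissible $(x,y)$ the feasible range of $z$ is $[\max(0, x+y-n), \min(x, y)]$, and both endpoints automatically lie in $\mathcal{I_R}$: the lower endpoint satisfies either $z = 0$ or $x + y - z = n$, and the upper endpoint satisfies $z \in \{x, y\}$, so $(x-z)(y-z)z = 0$. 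Writing any interior integer $z_r$ as a convex combination of these two endpoints and taking the corresponding convex combination of the two endpoint inequalities delivered by $\mathcal{S}$ yields the per-resource bound, completing the proof.

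The main obstacle is precisely this mismatch between $\mathcal{I}$ and $\mathcal{I_R}$: in an arbitrary game the tuples $(x_r,y_r,z_r)$ can land anywhere in $\mathcal{I}$, but the LP defining $\mathcal{S}$ only constrains the boundary set $\mathcal{I_R}$. Affinity of the constraint in the overlap coordinate $z$ is the structural feature that makes the reduction to boundary tuples go through; absent it, one would have to either strengthen the LP or resort to a more delicate decomposition of each resource into several "virtual" ones, substantially complicating the argument.
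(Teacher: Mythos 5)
Your proof is correct and follows essentially the same route as the paper: the per-resource decomposition of $\sum_i J_i(a_i',a_{-i})-\sum_i J_i(a)+C(a)$ combined with scalability, followed by a reduction from $\mathcal{I}$ to $\mathcal{I_R}$. The only difference is that the paper outsources the $\mathcal{I}\to\mathcal{I_R}$ reduction to a citation, whereas you prove it explicitly via the (valid) observation that the constraint is affine in $z$ for fixed $(x,y)$ and that both endpoints of the feasible $z$-range lie in $\mathcal{I_R}$.
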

\begin{proof}
Let $|a^{\textrm{ne}}| = \{x_1, \dots, x_m\}$, and $|a^{\textrm{opt}}| = \{y_1, \dots, y_m\}$. We define $z_r$ as the number of agents that select resource $r$ in both $a^{\textrm{ne}}$ and $a^{\textrm{opt}}$,
\[ z_r := |\{i \in N: r \in a_i^{\textrm{ne}}\} \cap \{i \in N: r \in a_i^{\textrm{opt}}\}| \]
where $z_r \leq \min\{x_r, y_r\}$, and $1 \leq x_r + y_r - z_r \leq n$.

The following simplification, adapted from \cite{marden2014generalized}, is instrumental in our proof of tightness,
\begin{align}
    &\sum_{i \in [n]} J_i(a_i^{\textrm{opt}}, a_{-i}^{\textrm{ne}}) - \sum_{i \in [n]} J_i(a^{\textrm{ne}}) + C(a^\textrm{ne})              \label{eq:additive_nature}\\
    = &\sum_{r \in \mathcal{R}}\! \left[ z_rf_r(x_r) + (y_r-z_r)f_r(x_r+1) \right] \nonumber\\
    & - \>\sum_{r \in \mathcal{R}}\!x_rf_r(x_r) + \sum_{r\in\mathcal{R}}\!c_r(x_r)  \nonumber\\
    = & \sum_{r \in \mathcal{R}}\! \left[ (z_r{-}x_r)f_r(x_r){+} (y_r{-}z_r)f_r(x_r{+}1){+}c_r(x_r) \right]\text{.}  \nonumber
\end{align}
We have shown that \eqref{eq:additive_nature} can be represented as a sum over a subset of the left-hand side expressions in \eqref{eq:ind_cons} corresponding to the resources in $\mathcal{R}$ weighted by their values. 
For the proof that it is sufficient to consider $(x,y,z) \in \mathcal{I_R}$, see the second part of the proof of \cite[Thm. 2]{chandan2019optimal}, and note that $(x,y,z)$ in this paper are equivalent to $(j,l,x)$ in their notation.
Thus, the parameters $(\lambda, \mu) \in \mathcal{S}(\mathcal{G}_T^n)$ are sure to satisfy the constraint in \eqref{eq:gsmooth}. 
This is because $C(a^{\textrm{ne}})$ is guaranteed to be less than or equal to \eqref{eq:additive_nature}. This implies that $\gamma(\mathcal{G}_T^n) \geq \rm{GPoA}(G_T^n)$.
\end{proof}

\begin{lemma} \label{lem:optimalityparameters}
    Consider the class of games $\mathcal{G}_T^n$. Suppose there exist $(\hat{\lambda},\hat{\mu}) \in \mathcal{S}(\mathcal{G}_T^n)$ such that,
    \[ \frac{\hat{\lambda}}{1-\hat{\mu}} = \gamma(\mathcal{G}_T^n). \]
    Then, there must be $(c^1,f^1)$, $(c^2,f^2)$ in $T$, $(x_1,y_1,z_1)$, $(x_2,y_2,z_2)$ in $\mathcal{I_R}$, and $\eta \in [0,1]$ such that,
    \begin{equation}
        \begin{split}
            &(z_j-x_j)f^j(x_j) + (y_j-z_j)f^j(x_j+1) + c^j(x_j)  \\
            =\> & \hat{\lambda}\,c^j(y_j)+\hat{\mu}\,c^j(x_j)
        \end{split}
    \end{equation}
    for $j=1,2$; and,
    \begin{equation}
        \begin{split}
            & \eta [ z_1\,f^1(x_1) + (y_1-z_1)\,f^1(x_1+1) ] \\
            & +\>\! (1\!-\!\eta) [z_2\,f^2(x_2) + (y_2-z_2)\,f^2(x_2+1) ] \\
            =\> &\eta\,x_1\,f^1(x_1) + (1-\eta)\,x_2\,f^2(x_2).
        \end{split}
    \end{equation}
\end{lemma}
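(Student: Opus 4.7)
My plan is to reformulate the minimization of $\lambda/(1-\mu)$ over $\mathcal{S}(\mathcal{G}_T^n)$ as a linear program via a Charnes--Cooper style substitution, apply LP duality together with complementary slackness, and then translate the resulting optimality conditions back to the $(\lambda,\mu)$ coordinates used in the statement.

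First, the substitution $\nu = 1/(1-\mu)$, $\rho = (1-\mu)/\lambda$ is a bijection between $\{\lambda>0,\mu<1\}$ and $\{\nu>0,\rho>0\}$, under which $\lambda/(1-\mu) = 1/\rho$ and the defining inequalities of $\mathcal{S}(\mathcal{G}_T^n)$ become exactly
\[
\rho\,c^t(x) + \nu\bigl[(z-x)f^t(x) + (y-z)f^t(x+1)\bigr] \leq c^t(y),
\]
so minimizing $\lambda/(1-\mu)$ corresponds to maximizing $\rho$, recovering the LP of \cref{thm:equivalencetolp}. The assumption that $(\hat\lambda,\hat\mu)$ attains $\gamma(\mathcal{G}_T^n)$ corresponds to an optimal primal solution $(\rho^\star,\nu^\star)$ with $\nu^\star>0$ (otherwise $\hat\lambda = 1/\nu^\star$ would not be defined), and then $\hat\lambda = 1/\nu^\star$, $1-\hat\mu = \rho^\star/\nu^\star$.

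Next, I take the LP dual. With dual variables $\alpha_{t,x,y,z}\geq 0$ attached to the main constraints, the freeness of $\rho$ yields the equality $\sum \alpha_{t,x,y,z}\,c^t(x)=1$, while $\nu\geq 0$ yields the inequality $\sum \alpha_{t,x,y,z}\bigl[(z-x)f^t(x) + (y-z)f^t(x+1)\bigr]\geq 0$. Strong duality and complementary slackness then give: because $\nu^\star>0$, the second dual constraint is tight; and because the dual has only two non-sign constraints, a basic optimal dual solution has at most two positive entries, which I label $\alpha_1^\star,\alpha_2^\star$, corresponding to tuples $(c^j,f^j,x_j,y_j,z_j)$ for $j=1,2$ (possibly coinciding in degenerate cases). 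Each such tuple indexes a tight primal constraint.

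Finally, I translate back. Multiplying each tight primal equation by $\hat\lambda = 1/\nu^\star$ and using $1-\hat\mu=\rho^\star/\nu^\star$ converts it to
\[
(z_j-x_j)f^j(x_j) + (y_j-z_j)f^j(x_j+1) + c^j(x_j) = \hat\lambda\,c^j(y_j) + \hat\mu\,c^j(x_j),
\]
which is the first conclusion of the lemma. Since $\sum \alpha^\star c^t(x)=1$ forbids $\alpha_1^\star=\alpha_2^\star=0$, the value $\eta := \alpha_1^\star/(\alpha_1^\star+\alpha_2^\star)\in[0,1]$ is well defined; dividing the tight dual $\nu$-equation by $\alpha_1^\star+\alpha_2^\star$ and splitting $z_jf^j(x_j) = (z_j-x_j)f^j(x_j) + x_jf^j(x_j)$ yields the convex-combination identity of the lemma. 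The main obstacle I anticipate is confirming $\nu^\star>0$ (built into the hypothesis, since otherwise $\hat\lambda=\infty$) and handling the degenerate case in which only one primal constraint is independently tight at the optimum; there one sets $(c^2,f^2,x_2,y_2,z_2)=(c^1,f^1,x_1,y_1,z_1)$, and the normal-cone optimality condition forces $(z_1-x_1)f^1(x_1)+(y_1-z_1)f^1(x_1+1)=0$, which is exactly what the second identity reduces to in that case for any $\eta\in[0,1]$.
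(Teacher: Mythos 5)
Your proof is correct, but it takes a genuinely different route from the paper's. The paper argues geometrically in the $(\lambda,\mu)$ plane: it describes the feasible set $\mathcal{S}(\mathcal{G}_T^n)$ as an intersection of halfplanes $\mathcal{H}_{c,f,x,y,z}$, traces the boundary to locate where the level line of $\lambda/(1-\mu)$ through $\mu$-intercept $1$ is supported, and case-analyzes the three possible locations of the optimizer (the $\lambda$-axis-type constraints with $x=0$, a vertex of two halfplanes, or the $\mu$-type constraints with $y=0$), asserting in each case the existence of the convex weight $\eta$. You instead pass to the $(\rho,\nu)$ coordinates of \cref{thm:equivalencetolp} and read off both identities from LP complementary slackness: tightness of the primal constraints indexed by the support of a basic optimal dual solution gives the first identity, and tightness of the dual constraint associated with $\nu\geq 0$ (valid because the hypothesis supplies an optimal primal point with $\nu^\star>0$) gives the second, with $\eta$ the normalized dual weight. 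This is arguably a cleaner and more rigorous derivation of the same underlying tangency condition — in particular it makes precise the paper's unargued claim ``for all these cases, there exists $\eta\in[0,1]$ such that\dots'' and handles the one-active-constraint degeneracy uniformly — at the cost of invoking strong duality and the existence of an optimal basic dual solution (both standard here, since the dual feasible set lies in the nonnegative orthant and the primal value is finite by hypothesis). Two small points to fix: your opening substitution should read $\nu = 1/\lambda$, not $\nu = 1/(1-\mu)$ (you use $\hat\lambda = 1/\nu^\star$ consistently afterward, so this is only a typo); and you should state explicitly that complementary slackness is being applied to the particular optimal primal solution $(\rho^\star,\nu^\star)$ corresponding to $(\hat\lambda,\hat\mu)$, so that the tight constraints are tight at that point, which is what the first identity requires.
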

\begin{proof}
    We define $\mathcal{H}_{c,f,x,y,z}$ as the set of $(\lambda, \mu) \in \mathbb{R}_{>0}\times\mathbb{R}_{<1}$ that satisfy, for given $c$, $f$, $x$, $y$ and $z$,
    \[ (z\!-\!x)f(x) \!+\! (y\!-\!z)f(x\!+\!1)\! +\! c(x) \leq \lambda\,c(y)\!+\!\mu\,c(x). \]
    We denote by $\delta\mathcal{H}_{c,f,x,y,z}$ the boundary of the set, i.e. the points $(\lambda, \mu)$ that satisfy the above inequality with equality.
    Some simplifications can be made for the cases when either $x = 0$ or $y = 0$. When $x = 0$ and $y > 0$, then $z = 0 = \min\{x,y\}$, and the set $\mathcal{H}_{c,f,0,y,0}$ contains all $\mu < 1$, and all $\lambda \geq f(1)y/c(y)$. When $x > 0$ and $y=0$, $z=0$ once again, the set $\mathcal{H}_{c,f,x,0,0}$ contains all $\lambda > 0$, and all $\mu \geq 1 - xf(x)/c(x)$.  For the halfplanes with $x > 0$ and $y > 0$, the boundary is,
    \begin{equation*}
        \begin{split}
            \mu =  -\frac{c(y)}{c(x)} \lambda + \frac{1}{c(x)}\Big[ &(z-x)f(x) + \\
            & (y-z)f(x+1) + c(x) \Big]\text{.}
        \end{split}
    \end{equation*}
    
    Note that finding $\gamma(\mathcal{G}_T^n)$ is equivalent to finding the point along the boundary that intersects the line with $\mu$-intercept equal to $1$ with the most negative slope. 
    Thus, we can find the optimal $(\lambda, \mu)$ by starting on the boundary at $\lambda = \max_y f(1)y/c(y)$, then following the boundary until we reach a line with $\mu$-intercept less than 1. There are three possibilities for the optimal $\lambda, \mu$: at $\lambda = \max_y f(1)y/c(y)$, at the intersection of two halfplanes with $x_1, x_2 > 0$, $y_1, y_2 > 0$, $z_1, z_2 \geq 0$, or at $\mu = 1 - \min_x xf(x)/c(x)$. 
    
    If the optimal $\lambda, \mu$ occurs at $\lambda = \max_y f(1)y/c(y)$, then $x_1 = 0$, $y_1 > 0$ and $z_1 = 0$, and the other halfplane has $\mu$-intercept less than one, and $x_2 > 0$, $y_2 \geq 0$ and $z_2 \geq 0$. 
    Note that it is also possible for the optimal $\lambda, \mu$ to occur at $\lambda = \max_{y>0} f(1)y/c(y)$ and $\mu = 1 - \min_{x>0} xf(x)/C(x)$.

    For all these cases, there exists $\eta \in [0,1]$ such that,
    \begin{align*}
        & \eta\,[ z_1\,f^1(x_1) + (y_1-z_1)\,f^1(x_1+1) ] \\
        & +\> (1-\eta)\,[z_2\,f^2(x_2) + (y_2-z_2)\,f^2(x_2+1) ] \\
        =\> & \eta x_1 f^1(x_1) + (1-\eta) x_2 f^2(x_2)
    \end{align*}
    
    If the optimal $(\lambda, \mu)$ occur on a halfplane $\mathcal{H}_{c,f,x,y,z}$ with $\mu$-intercept equal to 1, we select $c_1 = c_2 = c$, $f^1 = f^2 = f$, $x_1 = x_2 = x$, $y_1 = y_2 = y$ and $z_1 = z_2 = z$, where any $\eta \in [0,1]$ will satisfy the equality.
\end{proof}

\begin{lemma} \label{lem:inf_not_attained}
    For the class of games $\mathcal{G}_T^n$, suppose no point $(\lambda, \mu) \in \mathcal{S}(\mathcal{G}_T^n)$ satisfies $\frac{\lambda}{1 - \mu} = \gamma(\mathcal{G}_T^n)$. Then, there exists $(f,c) \in T$ and $(x,y,z) \in \mathcal{I_R}$ such that
    \begin{align}
        \gamma(\mathcal{G}_T^n) &= \frac{c(x)}{c(y)} \label{eq:inf_poa} \\
        (y-z)f(x+1) + zf(x) &> xf(x) \label{eq:inf_nash_satisfied}
    \end{align}
\end{lemma}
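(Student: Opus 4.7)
The plan is to recast the infimization defining $\gamma(\mathcal{G}_T^n)$ as a linear program via the standard change of variables for linear-fractional objectives. Setting $s := 1/(1-\mu) > 0$ and $t := \lambda/(1-\mu) > 0$, the objective becomes simply $t$, while multiplying each constraint in \cref{def:setS} by $s$ yields the equivalent system
\[
s\bigl[(z-x)f(x) + (y-z)f(x+1)\bigr] + c(x) \leq t\,c(y),
\]
indexed by $j = (c,f,x,y,z) \in T \times \mathcal{I_R}$. Writing $a_j$ for the bracketed quantity, this is a linear program in $(s,t) \in \mathbb{R}_{>0}^2$ whose value is $\gamma(\mathcal{G}_T^n)$.

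For indices with $c(y) > 0$, each constraint reads $t \geq m_j s + n_j$ with slope $m_j := a_j/c(y)$ and intercept $n_j := c(x)/c(y)$. The remaining indices have $y = 0$ (hence $z = 0$), so $c(y) = 0$; after simplification these translate into lower bounds on $s$ and do not upper-bound $t$. A brief case check shows that if any such lower bound becomes active at the optimum, the optimum is in fact attained at the corresponding equality, contradicting the hypothesis. We may therefore focus on the convex, piecewise-linear function $t^*(s) := \max_j (m_j s + n_j)$, whose infimum over $s > 0$ equals $\gamma(\mathcal{G}_T^n)$.

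Convexity of $t^*$ implies that $\inf_{s > 0} t^*$ is attained whenever $t^*$ has a finite minimizer, which occurs as soon as there coexist slopes $m_j > 0$ and $m_k < 0$; this corresponds precisely to the situation of \cref{lem:optimalityparameters}. Since the infimum is assumed not attained, all slopes must have the same sign. If all $m_j \leq 0$, then $t^*$ is non-increasing and reaches the value $\max_{j : m_j = 0} n_j$ at every sufficiently large $s > 0$, contradicting non-attainment. Thus all $m_j \geq 0$, $t^*$ is non-decreasing, and $\gamma(\mathcal{G}_T^n) = \lim_{s \to 0^+} t^*(s) = \max_j n_j$.

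To extract an index realizing \eqref{eq:inf_poa}--\eqref{eq:inf_nash_satisfied}, I argue by contradiction: if every $j^*$ with $n_{j^*} = \gamma(\mathcal{G}_T^n)$ had $m_{j^*} = 0$, then every remaining index would satisfy $n_j < \gamma(\mathcal{G}_T^n)$, and choosing $s > 0$ small enough to enforce $m_j s < \gamma(\mathcal{G}_T^n) - n_j$ for each such $j$ would produce a feasible point with $t^*(s) = \gamma(\mathcal{G}_T^n)$, contradicting non-attainment. Hence there exists $j^* = (c,f,x,y,z)$ with $n_{j^*} = c(x)/c(y) = \gamma(\mathcal{G}_T^n)$---which is \eqref{eq:inf_poa}---and $m_{j^*} > 0$, equivalently $a_{j^*} > 0$, i.e.\ $(y-z)f(x+1) + zf(x) > xf(x)$---which is \eqref{eq:inf_nash_satisfied}. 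I expect the main technical hurdle to be the rigorous bookkeeping of the $c(y) = 0$ edge cases and the verification that the change of variables preserves the original infimum, together with the final step pinning down the correct $j^*$ among those achieving $n_{j^*} = \gamma(\mathcal{G}_T^n)$.
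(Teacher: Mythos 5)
Your proof is correct in substance but takes a genuinely different route from the paper's. The paper works directly in the $(\lambda,\mu)$-plane: it follows the boundary of $\mathcal{S}(\mathcal{G}_T^n)$, constructs a sequence $(\lambda_k,\mu_k)$ with $\frac{\lambda_k}{1-\mu_k}\downarrow\gamma(\mathcal{G}_T^n)$, argues that non-attainment forces $\mu_k\to-\infty$ along a binding constraint line with $\mu$-intercept exceeding $1$ (whence \eqref{eq:inf_nash_satisfied}), and reads off \eqref{eq:inf_poa} as the limit of the ratio along that line. You instead apply the Charnes--Cooper substitution $s=1/(1-\mu)$, $t=\lambda/(1-\mu)$, turning the linear-fractional program into a genuine LP and reducing everything to minimizing the convex piecewise-linear function $t^*(s)=\max_j(m_js+n_j)$ over $s>0$; non-attainment then forces the infimum to sit at $s\to0^+$ (the coordinate image of $\mu\to-\infty$), giving $\gamma=\max_j n_j$ and a max-intercept line with $m_{j^*}>0$. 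Your version is cleaner and makes the degenerate $y=0$ constraints explicit, whereas the paper's version keeps the geometric picture that drives the worst-case construction in \cref{thm:worstcasegame}. One intermediate claim of yours is imprecise: coexistence of slopes $m_j>0$ and $m_k<0$ guarantees a finite minimizer of $t^*$ over $\mathbb{R}$ but not over the open domain $s>0$ (e.g.\ $t^*(s)=\max\{2s+1,-s\}$ has its unconstrained minimizer at $s=-1/3$, so over $s>0$ the infimum is still approached only as $s\to0^+$), so you cannot conclude that all slopes share a sign. This does not damage the result: convexity plus non-attainment already imply $t^*$ is non-decreasing on $(0,\infty)$, so $\gamma=t^*(0^+)=\max_j n_j$ holds regardless of the signs of the non-maximal-intercept lines, and your final extraction of $j^*$ with $n_{j^*}=\gamma$ and $m_{j^*}>0$ uses only that fact together with non-attainment. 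Both your argument and the paper's silently assume $c(x)>0$ for $x\geq1$ and that the $x=z=0$ constraints supply a nontrivial bound; neither handles the fully degenerate cases, so no points lost there.
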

\begin{proof}
    Borrowing the notation and reasoning of the proof for \cref{lem:optimalityparameters}, we know that the strictest constraint must come from a line corresponding to some $(f,c) \in T$ that for some values of $x$, $y$ and $z$ has $\mu$-intercept greater than $1$, and the least negative slope among all constraints. Since the $\mu$-intercept is greater than $1$, $(z-x)f(x)+(y-z)f(x+1) > 0$, which implies that $(y-z)f(x+1) + zf(x) > xf(x)$.
    The least negative slope results from selecting $y = \argmin_{j \in N} c(j)$ and $x = \argmax_{j \in N} c(j)$. Much like in \cite[Lem. 5.5]{roughgarden2015intrinsic}, we construct a sequence $\{(\lambda_k, \mu_k)\}$ in $\mathcal{S}(\mathcal{G}_T^n)$ such that $\frac{\lambda_k}{1-\mu_k} \downarrow \gamma(\mathcal{G}_t^n)$. Since $\frac{\lambda}{1-\mu}$ is increasing in both $\lambda$ and $\mu$, it can be assumed that every point $(\lambda_k, \mu_k)$ lies on the boundary of $\mathcal{S}(\mathcal{G}_T^n)$. The values $\lambda_k$ are bounded from below by the constraints \eqref{eq:ind_cons} where $x=z=0$, and for finite $\gamma(\mathcal{G}_T^n)$, $\mu_k \leq b < 1$. Since $\frac{\lambda}{1-\mu}$ is continuous, $\frac{\lambda_k}{1-\mu_k} \downarrow \gamma(\mathcal{G}_t^n)$ and $\gamma(\mathcal{G}_t^n)$ is not attained, the sequence $\{\lambda_k, \mu_k)\}$ has no limit point. Thus, after some rearranging of \eqref{eq:ind_cons},
    \begin{align*}
        & \gamma(\mathcal{G}_T^n) = \lim_{k \to \infty} \frac{\lambda_k}{1-\mu_k} \\
        = &\lim_{k \to \infty} \frac{c(x)}{c(y)} + \frac{(z-x)f(x) + (y-z)f(x+1)}{c(y)(1-\mu_k)} = \frac{c(x)}{c(y)},
    \end{align*}
    since $\mu_k \to -\infty$, which completes the proof.
\end{proof}

\noindent\textbf{Proof of \cref{thm:worstcasegame}}
\begin{proof}
    We first consider the case where the value $\gamma(\mathcal{G}_T^n)$ is not attained for any point $(\lambda, \mu) \in \mathcal{S}(\mathcal{G}_T^n)$ as in \cref{lem:inf_not_attained}. We recover the pair $(f,c) \in T$ that result in the strictest constraint at $\lambda \to \infty$, $\mu \to -\infty$, as well as the values $x$, $y$ and $z$ that give the least negative slope. We setup a game with $l = \min\{x+y, n\}$ resources organized in a cycle and $l$ agents, i.e. $\mathcal{R} = \{r_1, \dots, r_l\}$ and $N = [l]$, where every resource has type corresponding to the pair $(f,c)$. Each agent $i \in [l]$ is endowed with two actions, the first is to select $x$ consecutive resources starting with $r_i$ and ending with $r_{i+x-1 \bmod l}$, while the second is to select $y$ consecutive resources ending with $r_{i+z-1 \bmod l}$. Condition \eqref{eq:inf_nash_satisfied} implies that the former strategy is a Nash equilibrium, and by \eqref{eq:inf_poa}, the price-of-anarchy is at least $\frac{c(x)}{c(y)} = \gamma(\mathcal{G}_T^n)$, as required.

    We retrieve $(c^1,f^1)$, $(c^2,f^2)$, $(x_1,y_1,z_1)$, $(x_2,y_2,z_2)$ and $\eta$; the optimality parameters as in \cref{lem:optimalityparameters}, where $\gamma(\mathcal{G}_T^n)$ is an upper-bound on $\textrm{GPoA}(\mathcal{G}_T^n)$ and is guaranteed to be attained, by \cref{lem:gammaisstricter,def:setS}. The worst-case game $G$ is constructed in the following way, define two disjoint cycles $E_1$ and $E_2$ each with $l = \min \{ \max \{x_1+y_1,x_2+y_2\}, n \} $ resources enumerated from $1$ to $l$. The resources in $E_1$ are assigned cost function $c_1$, distribution rule $f^1$ and value $\eta$, whereas the resources in $E_2$ are assigned $c_2$, $f^2$ and $(1-\eta)$. There are also $l \leq n$ players, enumerated $1$ through $l$ and we restrict the action set $\mathcal{A}$ to two strategies, $a^{\textrm{ne}}$ and $a^{\textrm{opt}}$. In the first strategy, each player $i \in [l]$ selects $x_1$ consecutive resources in $E_1$, $[i, i+1, \dots, i+x_1-1] \mod l$, and $x_2$ consecutive resources in $E_2$ starting with resource $i$. In the second strategy, player $i$ selects $y_1$ consecutive resources in $E_1$ ending with resource $i-1$, and $y_2$ consecutive resources in $E_2$ ending with resource $i-1$. 

    We continue by demonstrating that the first strategy satisfies the conditions for a Nash equilibrium,
    \begin{align}
        & J_i(a^{\textrm{ne}}) = \eta\,x_1\,f^1(x_1) + (1-\eta)\,x_2\,f^2(x_2) \nonumber \\ 
        &= \eta\,[ z_1\,f^1(x_1) + (y_1-z_1)\,f^1(x_1+1) ] \nonumber \\
        &\quad +\> (1-\eta)\,[z_2\,f^2(x_2) + (y_2-z_2)\,f^2(x_2+1) ] \label{eq:unilateral_deviation} \\
        &= J_i(a_i^{\textrm{opt}}, a_{-i}^{\textrm{ne}})\text{,} \nonumber
    \end{align}
    where \eqref{eq:unilateral_deviation} holds due to \cref{lem:optimalityparameters}. Now we show that the price-of-anarchy of the game is lower-bounded by $\gamma(\mathcal{G}_T^n)$, thus implying equality.
    \begin{align*}
        & C(a^{\textrm{ne}})= C(a^{\textrm{ne}}) - \sum_{i=1}^k J_i(a^{\textrm{ne}}) + \sum_{i=1}^k J_i(a_i^{\textrm{opt}}, a_{-i}^{\textrm{ne}}) \\
        &= k\,\eta\,\Big[\hat{\lambda}\,c_1(y_1)+\hat{\mu}\,c_1(x_1)\Big] \\
        & \quad +\> k\,(1-\eta)\,\Big[\hat{\lambda}\,c_2(y_2)+\hat{\mu}\,c_2(x_2)\Big] \\
        &= \hat{\lambda}\,C(a^{\textrm{opt}}) + \hat{\mu}\,C(a^{\textrm{ne}})
    \end{align*}
    In the above, $\gamma(\mathcal{G}_T^n) = \rm{PoA}(G) \leq \rm{PoA}(\mathcal{G}_T^n)$. Since $\gamma(\mathcal{G}_T^n) \geq \textrm{GPoA}(\mathcal{G}_T^n) \geq \textrm{PoA}(\mathcal{G}_T^n)$ by \cref{lem:gammaisstricter}, $\textrm{GPoA}(\mathcal{G}_T^n)$ must be tight.
\end{proof}

\noindent\textbf{Proof of \cref{thm:equivalencetolp}}
\begin{proof}
We begin by noting that, by \cref{def:setS}, we need only consider $(x,y,z) \in \mathcal{I_R}$ when calculating the price-of-anarchy in local resource allocation games. Observe that the constraints in the linear program are equivalent to the simplified conditions for $(\lambda, \mu)$-generalized smoothness in \eqref{eq:ind_cons}. The linear program constraints read as,
\[ c(y)\! -\! \rho\,c(x)\! +\! \nu\left[ (x\!-\!z)\,f(x)\! -\! (y\!-\!z)\,f(x\!+\!1) \right] \geq 0, \]
for all $(x,y,z) \in \mathcal{I}_\mathcal{R}$, where $\rho = \frac{1-\mu}{\lambda}$, and $\nu = \frac{1}{\lambda}$. Substituting the expressions for $\nu$ and $\rho$ into the above, and rearranging, we are left with,
\begin{align*}
     & (z-x)\,f(x) + (y-z)\,f(x+1) + c(x) \\
     \leq\> &\lambda\,c(y) + \mu\,c(x)\text{,}
\end{align*}
for all  $(x,y,z) \in \mathcal{I}_\mathcal{R}$, which is identical to \eqref{eq:ind_cons} when there is a single type. 
Next, observe that maximizing $\rho$ is equivalent to minimizing $\lambda/(1-\mu)$, which concludes the proof.
\end{proof}

\begin{lemma} \label{lem:poagreaterthanequal}
    For a given class of local resource allocation games $\mathcal{G}_T^n$, it holds that,
    \begin{equation} \label{eq:poageq}
        \emph{PoA}(\mathcal{G}_T^n) \geq \max_{\mathbf{t} \in T}\left\{\emph{PoA}\left(\mathcal{G}_\mathbf{t}^n\right)\right\} \text{.}
    \end{equation}
\end{lemma}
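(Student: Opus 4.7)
\noindent\textbf{Proof proposal for \cref{lem:poagreaterthanequal}}

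The plan is to reduce the statement to a simple containment argument between classes of games. The central observation is that for every single type $\mathbf{t} = (c,f) \in T$, the single-type class $\mathcal{G}_{\{\mathbf{t}\}}^n$ is a subset of the multi-type class $\mathcal{G}_T^n$: indeed, any $n$-player local resource allocation game whose resources are all of type $\mathbf{t}$ is, a fortiori, a game whose resources are drawn from the set of admissible types $T$. Formally, given any $G = (n,\mathcal{R},\mathcal{A},\{(c_r,f_r)\}_{r\in\mathcal{R}}) \in \mathcal{G}_{\{\mathbf{t}\}}^n$, each pair $(c_r,f_r) = v_r \cdot (c,f)$ is admissible under $T$, so $G \in \mathcal{G}_T^n$ as well.

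From here I would invoke the definition of the price-of-anarchy for a class as a supremum over the class. Since supremum is monotone under set inclusion, we obtain, for each $\mathbf{t} \in T$,
\[
    \mathrm{PoA}(\mathcal{G}_T^n) \;=\; \sup_{G \in \mathcal{G}_T^n} \mathrm{PoA}(G) \;\geq\; \sup_{G \in \mathcal{G}_{\{\mathbf{t}\}}^n} \mathrm{PoA}(G) \;=\; \mathrm{PoA}(\mathcal{G}_{\{\mathbf{t}\}}^n).
\]
Taking the maximum of the right-hand side over $\mathbf{t} \in T$ (the set $T$ is finite, so the max is attained) yields the desired inequality.

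I do not anticipate a genuine obstacle here: the statement is essentially a monotonicity-of-supremum remark, made precise by noting that the scalability condition in the definition of $\mathcal{G}_T^n$ permits any subset of types to be used, in particular a single one. The only minor care point is formal: one must confirm that the single-type games employed to witness $\mathrm{PoA}(\mathcal{G}_{\{\mathbf{t}\}}^n)$ genuinely satisfy the definition of $\mathcal{G}_T^n$ (same agent set size $n$, resources typed from $T$, scalability by a nonnegative factor $v_r$), which is immediate.
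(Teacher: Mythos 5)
Your proposal is correct and follows essentially the same route as the paper: both rest on the observation that every single-type game in $\mathcal{G}_{\{\mathbf{t}\}}^n$ is also a member of $\mathcal{G}_T^n$, so the class-level price-of-anarchy (a supremum) can only increase when passing to the larger class. Your phrasing via monotonicity of the supremum is marginally more careful than the paper's appeal to ``the worst-case game'' (which need not be attained), but it is the same argument.
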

\begin{proof}
    We begin by proving that it is impossible to have $\textrm{PoA}(\mathcal{G}_T^n) < \max_{\mathbf{t}\in T} \{\textrm{PoA}(\mathcal{G}_\mathbf{t}^n)\}$. Simply note that the worst-case game in $\mathcal{G}_\mathbf{t}^n$ for each $\mathbf{t} \in T$ is a member of the class of games $\mathcal{G}_T^n$.
    
    Next, consider the class of games with $n = 3$, and $T = \{T_1, T_2\} = \{(x^2, x), (x,x)\}$. By \cite[Thm. 2]{chandan2019optimal}, the prices-of-anarchy for the games with the individual types are $\textrm{PoA}(\mathcal{G}_{T_1}^3) = 1.857$ and $\textrm{PoA}(\mathcal{G}_{T_2}^3) = 2.0$. But, $\rm{PoA}(\mathcal{G}_T^n) = 2.6$ by \eqref{eq:multiple_pairs}.
\end{proof}

\begin{lemma} \label{lem:scalingparameters}
    For a given class of local resource allocation games $\mathcal{G}_T^n$, there exist scaling parameters $\alpha_t \in \mathbb{R}_{\geq 0}$, $t \in |T|$ such that,
    \[ \emph{PoA}(\mathcal{G}_\tau^n, n) = \max_{\mathbf{t}\in T} \{\emph{PoA}(\mathcal{G}_\mathbf{t}^n)\}, \]
    where $\tau = \{(c^t, \alpha_tf^t)\}_{t=1}^{|T|}$.
\end{lemma}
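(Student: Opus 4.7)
The plan is to exploit the linear programming characterization of \cref{thm:equivalencetolp} together with the scaling invariance of the single-type price-of-anarchy under the rescaling $f \mapsto \alpha f$. The key observation is that in the multi-type LP \eqref{eq:multiple_pairs}, the multiplier $\nu$ is shared across all types $t$, which couples their constraints. By scaling each distribution rule $f^t$ by a judiciously chosen $\alpha_t$, the effective multiplier $\nu \alpha_t$ can be tuned independently for each type, thereby decoupling the LP into one constraint block per type. Combined with \cref{lem:poagreaterthanequal}, this will deliver equality rather than merely an inequality.

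First, for each $\mathbf{t}=(c^t,f^t)\in T$, I would apply \cref{thm:equivalencetolp} to the single-type class $\mathcal{G}_\mathbf{t}^n$ and let $(\nu_t^*,\rho_t^*)$ denote an optimal solution of the corresponding LP, so that $\textrm{PoA}(\mathcal{G}_\mathbf{t}^n)=1/\rho_t^*$, with $\nu_t^*\geq 0$ by feasibility. I would then define the scaling parameters $\alpha_t := \nu_t^*$ and form the class $\mathcal{G}_\tau^n$ with $\tau=\{(c^t,\alpha_t f^t)\}_{t=1}^{|T|}$.

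Next, I would instantiate the LP of \cref{thm:equivalencetolp} for the class $\mathcal{G}_\tau^n$; substituting $\alpha_t f^t$ in place of $f^t$, its constraints become
\[
c^t(y)-\rho\, c^t(x)+\nu\,\alpha_t\bigl[(x{-}z)f^t(x)-(y{-}z)f^t(x{+}1)\bigr]\geq 0
\]
for all $(c^t,f^t)\in T$ and $(x,y,z)\in\mathcal{I_R}$. I would then exhibit the feasible point $\nu=1$, $\rho=\min_t \rho_t^*$: with $\nu=1$, the constraint indexed by $t$ coincides exactly with the single-type constraint at $(\nu_t^*,\rho)$, which holds for every $\rho\leq\rho_t^*$, hence in particular for $\rho=\min_t\rho_t^*$. (The degenerate case $\nu_t^*=0$ poses no difficulty, since then $\alpha_t=0$ and the constraint reduces to $c^t(y)\geq\rho\, c^t(x)$, which is precisely what $\rho_t^*$ satisfies at $\nu=0$.) Feasibility yields $C^*\geq\min_t\rho_t^*$, and therefore
\[
\textrm{PoA}(\mathcal{G}_\tau^n)=\tfrac{1}{C^*}\leq \tfrac{1}{\min_t \rho_t^*}=\max_t \tfrac{1}{\rho_t^*}=\max_{\mathbf{t}\in T}\textrm{PoA}(\mathcal{G}_\mathbf{t}^n).
\]

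Finally, the matching lower bound $\textrm{PoA}(\mathcal{G}_\tau^n)\geq \max_{\mathbf{t}\in T}\textrm{PoA}(\mathcal{G}_\mathbf{t}^n)$ comes for free from \cref{lem:poagreaterthanequal}, completing the argument. I do not foresee a major obstacle here; the only potential subtleties are the degenerate case $\nu_t^*=0$ discussed above and the observation that $\alpha_t=\nu_t^*\geq 0$ automatically meets the nonnegativity requirement demanded by the statement.
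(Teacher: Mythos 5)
Your proposal is correct and follows essentially the same route as the paper: choose $\alpha_t=\nu_t^*$ from the single-type LPs, then exhibit $(\nu,\rho)=(1,\min_t\rho_t^*)$ as a feasible point of the multi-type LP for $\tau$ to get the upper bound, with \cref{lem:poagreaterthanequal} (plus scale-invariance of the equilibrium conditions under $f\mapsto\alpha f$) supplying the matching lower bound. Your treatment is in fact slightly more explicit than the paper's, particularly in spelling out the decoupling of the shared multiplier $\nu$ and the degenerate case $\nu_t^*=0$.
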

\begin{proof}
    We denote by $(\nu^*_t, \rho^*_t)$ the solution to \cite[Thm. 2]{chandan2019optimal} for the class of games with one type, $(c^t,f^t) \in T$. First, note that uniform scaling of the distribution rules does not affect the equilibrium conditions, so $\textrm{PoA}(\{(c^t, \alpha_tf^t)\}, n) = \textrm{PoA}(\{(c^t, f^t)\}, n)$ for all $\alpha_t > 0$ and all $\mathbf{t} \in T$. Thus, recalling \cref{lem:poagreaterthanequal},
    \[ \textrm{PoA}(\mathcal{G}_\tau^n) \geq \max_{\mathbf{t}\in T} \{\textrm{PoA}(\mathcal{G}_\mathbf{t}^n)\}, \]
    where $\tau = \{(c^t, \alpha_1f^t)\}_{t=1}^{|T|}$. Select $\alpha_t = \nu^*_t$ for all $t \in [|T|]$, such that $\tau := \{(c^t, \nu^*_tf^t)\}_{t=1}^{|T|}$.
    We define $\hat{\rho} := \min_{t\in [|T|]} \rho_t^*$. By construction, $(\hat{\rho}, 1)$ satisfies all the constraints in \eqref{eq:multiple_pairs} for types in $\tau$. Thus, $\textrm{PoA}(\mathcal{G}_\tau^n) \leq 1/\hat{\rho} = \max_{\mathbf{t}\in T} \{\textrm{PoA}(\mathcal{G}_\mathbf{t}^n)\}$.
\end{proof}

\noindent\textbf{Proof of \cref{thm:optimaldistributionrules}}
\begin{proof}
    By \cref{lem:poagreaterthanequal}, the lowest achievable price-of-anarchy is $\max_{\mathbf{t} \in T^*} \{\textrm{PoA}(\mathcal{G}_\mathbf{t}^n)\}$ where $T^* := \{(c^t, f_\textrm{OPT}^t)\}_{t=1}^{|T|}$. Additionally, each of the $f_\textrm{OPT}^t$ minimizes its corresponding $\textrm{PoA}(\mathcal{G}_\mathbf{t}^n)$ by \cite[Thm. 3]{chandan2019optimal}. Finally, we have that the following statement,
    \[ \textrm{PoA}(\mathcal{G}_{T^*}^n) = \max_{\mathbf{t}\in T} \{\textrm{PoA}(\mathcal{G}_\mathbf{t}^n)\}, \]
    holds by the construction of $f_{\textrm{OPT},t}$ in \cite[Thm. 3]{chandan2019optimal}; the linear program already multiplies the distribution rule and $\lambda_t^*$, and it was shown in the proof of \cref{lem:scalingparameters} that $\alpha_t = \lambda_t^*$ for all $t \in T$ is an optimal set of scaling parameters.
\end{proof}

\end{document}